\newcommand*\rfrac[2]{{}^{#1}\!/_{#2}}
\newcommand{\p}{\textrm{p}}
\newcommand{\DTP}{\textrm{DTP}}
\newcommand{\A}{\mathcal{A}}
\newcommand{\PDTP}{\textrm{PDTP}}
\newcommand*\mean[1]{\overline{#1}}
\DeclarePairedDelimiterX{\infdivx}[2]{(}{)}{%
  #1\;\delimsize\|\;#2%
}
\newcommand{\kldiv}{D_{\textrm{KL}}\infdivx}
\newcommand{\paragraphe}[1]{\vspace{0.75ex}\noindent{\em #1} }
\newcommand{\paragraphbe}[1]{\vspace{0.75ex}\noindent{\bf \em #1} }
\begin{document}
\title{Towards Measuring Membership Privacy} 

\thanks{This work was supported in part by NSF CNS grants 13-30491 and 14-08944. The views expressed are those of the authors only.}

\author{Yunhui Long}
\affiliation{%
  \institution{University of Illinois at Urbana-Champaign}
}
\email{ylong4@illinois.edu}

\author{Vincent Bindschaedler}
\affiliation{%
	\institution{University of Illinois at Urbana-Champaign}
}
\email{bindsch2@illinois.edu}

\author{Carl A. Gunter}
\affiliation{%
	\institution{University of Illinois at Urbana-Champaign}
}
\email{cgunter@illinois.edu}

\begin{abstract}
Machine learning models are increasingly made available to the masses through public query interfaces. Recent academic work has demonstrated that malicious users who can query such models are able to infer sensitive information about records within the training data. Differential privacy can thwart such attacks, but not all models can be readily trained to achieve this guarantee or to achieve it with acceptable utility loss. As a result, if a model is trained without differential privacy guarantee, little is known or can be said about the privacy risk of releasing it.

In this work, we investigate and analyze membership attacks to understand why and how they succeed. Based on this understanding, we propose \emph{Differential Training Privacy} (DTP), an empirical metric to estimate the privacy risk of publishing a classier when methods such as differential privacy cannot be applied. DTP is a measure on a classier with respect to its training dataset, and we show that calculating DTP is efficient in many practical cases. We empirically validate DTP using state-of-the-art machine learning models such as neural networks trained on real-world datasets. Our results show that DTP is highly predictive of the success of membership attacks and therefore reducing DTP also reduces the privacy risk. We advocate for DTP to be used as part of the decision-making process when considering publishing a classifier. To this end, we also suggest adopting the DTP-$1$ hypothesis: if a classifier has a DTP value above $1$, it should not be published. 
\end{abstract}






\thispagestyle{plain}
\pagestyle{plain}

\maketitle

\vspace{-6pt}
\section{Introduction}
\label{sec:intro}
Machine learning models are widely used to extract useful information from large datasets and support many popular Internet services. Companies like Amazon~\cite{amazon2016ml}, Google~\cite{google2016ml}, and Microsoft~\cite{microsoft2016ml} have started to provide Machine Learning-as-a-Service (MLaaS). Data owners upload their data and obtain black-box access to a classification model which can be queried through an API.

Recent academic work has pointed out several security and privacy issues with this MLaaS paradigm. Models can be stolen or reverse engineered~\cite{tramer2016stealing}, sensitive population-level information can be inferred~\cite{fredrikson2014privacy,fredrikson2015model}; even the corresponding training datasets can be targeted by inference attacks. In particular, Shokri et al.~\cite{shokri2016membership} propose a membership attack to infer sensitive information about individuals whose data records are part of the training dataset.

Membership attacks are not new or specific to MLaaS and are known credible threats in various contexts such as in Genome-Wide Association Studies (GWAS)~\cite{homer2008resolving}. In principle, membership attacks are easily thwarted by ensuring that the model is trained using a differentially private process. Unfortunately, it is not always feasible to use a learning algorithm that satisfies differential privacy. Some classification models cannot readily be trained in this way, or doing so may come at the cost of an unacceptable utility loss.

This issue is exacerbated by the fact that if a classification model is trained without differential privacy, then little is known about its membership privacy risk; there is a gap in our ability to analyze the risk. At the same time, there is no reason to believe that all classification algorithms leak the exact same amount about their training datasets --- a model that is badly overfitted has the potential to leak more than one which is not. Yet there is currently no framework or principled way to measure this in practice.

In this paper, we investigate why and how membership inference attacks succeed. We derive a general attack framework and perform experiments on state-of-the-art classifiers trained on real-world datasets. Our goal is to design a metric which reflects membership privacy risk and can easily be calculated on a classifier.

We identify a simple measure called {\em Differential Training Privacy (DTP)} which quantifies the risk of membership inference of a record with respect to a classifier and its training data; the higher the DTP value, the higher the risk. We extend DTP to a metric over the classifier, by computing the DTP value of all records in the training dataset and taking the maximum---the {\em worst case} risk. DTP is not a substitute for a differentially private learning algorithm. Rather DTP provides an objective basis for decision making. For example, when two classifiers exhibit similar performance, it is preferable to publish the one with the lowest DTP.

Informally, given a classifier $\mathcal{A}(T)$ trained on a dataset $T$, the membership leakage of a record $t \in T$ is quantified by comparing that classifier's predictions to those of a classifier trained without record $t$, i.e., $\mathcal{A}(T\setminus\{t\})$. We assume black-box access, so an adversary can only learn information by querying the classifier. In this setting, membership attacks are predicated on distinguishing whether the classifier was trained on $T$ or $T\setminus \{t\}$. This is what DTP measures. Differences in predictions can occur for any query, but we initially focus on {\em direct attacks} which expect the maximum difference to be observed when querying features of $t$.

We provide experimental validation of direct attacks on both traditional classifiers such as naive Bayes and logistic regression and state-of-the-art models such as neural networks trained on two real-world datasets: a purchase dataset containing the shopping history of 300,000 individuals, and the popular UCI Adult dataset. Specifically, we perform several membership inference attacks on these classifiers, including the most effective attack known. Results suggest that DTP is a powerful predictor of the accuracy of direct attacks. Concretely, for neural networks learned on the purchase dataset, the Pearson correlation coefficient of the maximum membership attack accuracy with DTP is $0.8936$. For classifiers with DTP-values under $0.5$, none of the attacks we performed ever inferred membership status of any individual with accuracy greater than $66.5\%$ (baseline: $50\%$). By comparison our attacks almost always have over $90\%$ accuracy when DTP is larger than $4$. 

Although we do not know of any practical {\em indirect attacks}---which query the classifier for features other than those of $t$---we cannot exclude the possibility that such attacks may outperform direct ones. In fact, we produce a counter-example that this is indeed possible. We explore whether this situation occurs for known classification algorithms and derive a set of theoretical results, including a simple criterion (for classifiers) called {\em training stability}. For algorithms which satisfy training stability, the direct attack is always superior to {\em any} indirect attack. 

\paragraphbe{Contributions.}
We propose Differential Training Privacy (DTP) to quantify membership inference risk of publishing a classifier. DTP is used to inform the decision of whether to release a classifier when techniques to achieve differential privacy cannot be employed. We advocate for the DTP-$1$ hypothesis: if a classifier has a DTP value above $1$, it should not be published. We test this hypothesis by designing effective attacks on records with DTP greater than $1$ based on different classifiers and datasets. 

We present a general membership attack framework and evaluate three types of attacks on several classifiers trained on two real-world datasets, including the most effective attack known prior to this work---which we improve upon. Our empirical study of the relationship between the accuracy of membership attacks and DTP, reveals the latter to be a powerful predictor of the former.

We establish training stability as an important desideratum for classifiers, and prove that naive Bayes, random decision trees, and linear statistical queries satisfy it but $k$-NN does not. 

\section{Problem Statement}
Consider a data owner with a dataset $D$. We assume that the dataset is divided into multiple classes and each record in the dataset consists of a class label and a vector of features. We also assume that the data owner randomly partitions the dataset into a training set and a validation set. A machine learning model is trained on the training set. The model captures the correlations between the features and the class labels. It takes a feature vector as input and outputs a vector of probabilities for each class. 

Suppose the data owner wants to make the model available for public queries. That is, he intends to allow anyone to submit a feature vector and get predictions from the model. In this paper, we want to estimate the membership inference risk of releasing the model based on simple measurements and theoretical analysis. Specifically, given a machine learning model, we want to answer the following questions: \emph{Is there a privacy risk if the model is open to public queries? Are certain models riskier than others? Which records have higher privacy risks?}

We consider the privacy risk in the setting of membership privacy. The privacy risk of a record is estimated by the adversary's ability to infer whether the record is a part of the training set. We estimate this risk under the assumption that the machine learning models are trained by trusted parties. That is, we expect the parties training and releasing the model have the goal of protecting the training set to the extent this is practical and will therefore not be motivated to create a covert channel by embedding private information into the model's predictions. Under this assumption, we inspect the risk of accidental privacy leakage---the risk that a machine learning algoirthm would accidentally learn too much information about an individual record during the training process.  

To estimate privacy risk under a strong adversary, we assume that the adversary knows all the records in $D$, the size of the training set sampled from $D$, and the machine learning algorithm used to train the model. We assume the training set to be uniformly sampled from $D$. Therefore, each record in $D$ is equally likely to be included in the training set, and the adversary does not know which records are included. The goal of the adversary is to perform a membership inference attack by querying the machine learning model. That is, given a particular target record $t \in D$, he wants to infer whether $t$ is used to train the model he queries.  

\paragraphbe{Notations.}
Let $T$ be the training set of the model. Since $T$ is randomly sampled from $D$, we call $D$ the \emph{candidate set} for T.

We define $X^m$ to be a set of all possible features and $Y$ to be the set of all possible class labels: $\{y_1, y_2, \dots, y_k\}$. Each record $\mathbf{z} \in D$ can be divided into two parts: the feature vector $\mathbf{x} \in X^m$ and the class label $y \in Y$. Let $\A$ be the classification algorithm and $c = \A(T)$ be the output classifier. We assume that for each query $\mathbf{x}$, $c(\mathbf{x})$ returns a vector of conditional probability of all class labels $y \in \{y_1, y_2, \dots, y_k\}$ given feature vector $\mathbf{x}$. We use $p_{c}(y \mid \mathbf{x})$ to represent the conditional probability of class label $y$ given feature $\mathbf{x}$, predicted by classifier $c$. That is, $c(\mathbf{x}) = \left(\p_{c}(y_1 \mid \mathbf{x}), \p_{c}(y_2 \mid \mathbf{x}), \dots, \p_{c}(y_k \mid \mathbf{x})\right)$.
For classifiers that do not directly provide predicted probabilities, these can be obtained through normalization over the class labels.

In membership inferences, the adversary wants to infer whether a specific record $t = (\mathbf{x}^{(t)}, y^{(t)}) \in D$ is part of the training set $T$. We call $t$ the \emph{target record} of the attack. 

There are two approaches that an adversary can take to perform a membership inference attack on a target record $t$. He can launch a \emph{direct attack} by querying the features of the target record $\mathbf{x}^{(t)}$. Or, he can perform an \emph{indirect attack} by querying some feature vector $\mathbf{x} \neq \mathbf{x}^{(t)}$. Intuitively, a direct attack should have better performance than an indirect attack because querying the features of $t$ should give more information about $t$ compared to querying other features. In this paper, we first study the membership privacy risk under this assumption. In section~\ref{sec:measure_dtp}, we test this assumption by analyzing some commonly used classifiers. 

\section{Background on Membership Attacks}
\label{sec:background}

In this section, we briefly review the membership attacks proposed in prior work~\cite{shokri2016membership}. In a membership attack, an adversary is given black-box access to a target classifier $\A(T)$ and wants to infer whether a particular record $t$ is included in the training set $T$. 

The membership attack proposed by Shokri et al.~\cite{shokri2016membership} is a direct attack. The adversary queries the target record $t$ and uses the target classifiers' predictions on $t$ to infer the membership status of $t$. The authors transform the membership attack into a classification task. For each record $t$, there are two possible classes: class label ``in'' represents that the record is in the training set, and class label ``out'' represents that the record is \emph{not} in the training set. The features in the membership classification task are the original attributes of $t$ and the target classifier's predictions on $t$. The adversary trains an attack classifier for the membership classification task and uses it to infer the membership of a target record.

To create training sets for the membership classification task, the authors introduce ``shadow'' training techniques. Concretely, the adversary trains multiple ``shadow models'' using the same machine learning algorithm $\A$ on records sampled from the same population as $T$. These shadow models are used to simulate the behavior of the target model and generate a set of training records with labeled membership information. Specifically, the adversary queries each shadow model with two sets of records: the training set of the shadow model and a disjoint test set. For each record, a new feature vector is generated by concatenating the record's original attributes with the shadow classifier's predictions on that record. A new class label is created to reflect membership, i.e., ``in'' for records in the training set, ``out'' for records in the test set. 

Using the labeled dataset, the adversary trains a neural network as ``attack'' classifier and uses it to infer the membership of a target record $t$, given black-box access to a classifier $\A(T)$.

In this paper, we reproduce this membership attack and design two new membership attacks based on shadow training techniques. One of the new membership attacks has much better performance compared to the attack proposed in \cite{shokri2016membership}. We use these attacks to validate DTP's ability to measure the membership privacy risk of a target record $t$ with respect to a classifier $\A(T)$.
\section{Measuring Membership Privacy}

\subsection{Understanding Membership Attacks}

In a membership attack on a classifier $c = \A(T)$, an adversary tries to infer whether a target record $t = (\mathbf{x}^{(t)}, y^{(t)})$ is in the training set of $c$. The adversary can submit any query $\mathbf{x}$ and analyze the returned results $\mathbf{q} = c(\mathbf{x})$. With some background knowledge about the behavior of the classification algorithm and the candidate dataset $D$, an adversary can estimate the following two conditional probabilities: $\Pr \left[ t \in T \mid c(\mathbf{x}) = \mathbf{q} \right]$ and $\Pr \left[t \notin T \mid c(\mathbf{x}) = \mathbf{q} \right]$, which he uses to infer whether $t$ is a member of the training set.

The accuracy of a membership attack depends on two properties. First, it depends on the adversary's ability in correctly estimating $\Pr \left[ t \in T \mid c(\mathbf{x}) = \mathbf{q} \right]$ and $\Pr \left[t \notin T \mid c(\mathbf{x}) = \mathbf{q} \right]$. An adversary with a stronger background knowledge has more accurate estimates and thus obtains higher accuracy. Second, it depends on the properties of the target record $t$, the training dataset $T$, and the classification algorithm $\A$. Intuitively, a membership attack will have higher accuracy on overfitted classifiers because these capture statistical peculiarities of the training dataset that do not generalize to the whole population. Also, outliers may be more vulnerable to membership attacks. 

Based on this intuition, we want to estimate the membership attacks risk by measuring the generalizability of a classifier. Specifically, we calculate the maximum change in a classifier's predictions when the target record is removed from the training set.

\subsection{Differential Training Privacy}
We propose a measure of membership privacy risk of a target record $t$ with respect to classification algorithm $\A$ and training set $T$.

\begin{definition} [Differential Training Privacy]
\label{def:dtp}  
A record $t \in T$ is \emph{$\epsilon$-differentially training private} ($\epsilon$-DTP) with respect to classification algorithm $\A$ and training set $T$, if for all $\mathbf{x} \in X^{m}$ and $y \in Y$, we have 
\begin{displaymath}
    p_{\A(T)}(y \mid \mathbf{x}) \le e^{\epsilon} p_{\A(T \setminus \{t\})}(y \mid \mathbf{x})
\end{displaymath}
and 
\begin{displaymath}
    p_{\A(T)}(y \mid \mathbf{x}) \ge e^{-\epsilon} p_{\A(T \setminus \{t\})}(y \mid \mathbf{x}).
\end{displaymath}
\end{definition}

That is, the target record $t$ is DTP with algorithm $\A$ and training set $T$ if the predicted conditional probability of any class label $y$ given any feature vector $\mathbf{x}$ does not change much when $t$ is removed from $T$. By definition, a record $t$ with low DTP only has a small influence on the output of the classifier $\A(T)$. Since a classifier's output is also influenced by other factors such as random initialization and unexpected records in the training set, from the adversary's perspective, the small influence by $t$ is indistinguishable from the influence by other uncertain factors. Therefore, records with low DTP are less vulnerable to membership inference attacks. 

Unlike differential privacy~\cite{dwork2006differential}, DTP is a local property related to the training set. Therefore, DTP is experimentally measurable given a target record $t$, a training set $T$, and a classification algorithm $\A$. We use the following definition of DTP metric to quantify the privacy risk of the target record $t$. 

\begin{definition}
(DTP Metric).
Given classification algorithm $\A$, training set $T$, and target record $t$, the \emph{differential training privacy metric $\DTP_{\A,T}(t)$} is the least $\epsilon$ such that $t$ is $\epsilon$-DTP with $\A$ and $T$.
\end{definition}

In practice, $\DTP_{\A,T}(t)$ is calculated as the maximum ratio between the predictions given by $\A(T)$ and $\A(T \setminus \{t\})$ for all $\mathbf{x} \in X^{m}$ and for all $y \in Y$. That is,
\begin{displaymath}
\DTP_{\A,T}(t) = \underset{
  \mathbf{x} \in X^{m},  y \in Y 
}\max \epsilon_t^{(\mathbf{x},y)},
\end{displaymath}
where
\begin{displaymath}
\epsilon_t^{(\mathbf{x},y)} = \max \left( \frac{p_{\A(T)}(y \mid \mathbf{x})}{p_{\A(T \setminus \{t\})}(y \mid \mathbf{x})},  \frac{p_{\A(T \setminus \{t\})}(y \mid \mathbf{x})}{p_{\A(T)}(y \mid \mathbf{x})}\right).
\end{displaymath}

$\DTP_{\A,T}(t)$ can be naively measured by brute force over all $\mathbf{x} \in X^{m}$ and all $y \in Y$. However, considering the potentially large size of $X^{m}$, this approach is neither practical nor efficient. Therefore, we propose pointwise differential training privacy (PDTP) as a relaxation of DTP.

\begin{definition} \label{def:pdtp}
(Pointwise Differential Training Privacy). 
A record $t \in T$ is \emph{$\epsilon$-pointwise differentially training private} ($\epsilon$-PDTP) with respect to classification algorithm $\A$ and training dataset $T$, if for all $y \in Y$, we have 
\begin{displaymath}
    p_{\A(T)}(y \mid \mathbf{x}^{(t)}) \le e^{\epsilon} p_{\A(T \setminus \{t\})}(y \mid \mathbf{x}^{(t)})
\end{displaymath}
and 
\begin{displaymath}
    p_{\A(T)}(y \mid \mathbf{x}^{(t)}) \ge e^{-\epsilon} p_{A(T \setminus \{t\})}(y \mid \mathbf{x}^{(t)}).
\end{displaymath}
\end{definition}

Similarly we propose the following definition for the metric $\PDTP(\A, T)$:
\begin{definition}
(PDTP Metric).
Given classification algorithm $\A$, training set $T$, and target record $t \in T$, the \emph{pointwise differential training privacy metric $\PDTP_{\A,T}(t)$} is the least $\epsilon$ such that $t$ is $\epsilon$-PDTP with $\A$ and $T$.
\end{definition}

PDTP is a relaxation of DTP which bounds the change of the classifier's response on a single query $\mathbf{x}^{(t)}$, when $t$ is removed from the training set. Because of this, PDTP can be efficiently calculated given any classification algorithm $\A$, training set $T$, and target record $t$ by training an alternative classifier $\A(T \setminus \{t\})$. This process is similar to the leave-one-out (LOO) cross-validation technique used in machine learning~\cite{fukunaga1989leave}. Since LOO is a core technique for evaluating a machine learning model, there is considerable experience with both learning algorithms for which its calculation is easier and optimizations to improve this performance. 

The measurement $\PDTP_{\A,T}(t)$ is useful for two different reasons: First,~$\PDTP_{\A,T}(t)$ is a lower bound of $\DTP_{\A,T}(t)$. When $\DTP_{\A,T}(t)$ cannot be efficiently calculated, data owners can use $\PDTP_{\A,T}(t)$ as an optimistic estimation of a classifier's privacy risk. If a target record 
$t$ has high $\PDTP$ with $\A$ and $T$, releasing the classifier $\A(T)$ brings high privacy risks for $t$. Therefore, PDTP can be used as an indicator of membership privacy risk. Second, PDTP reflects the performance of a direct membership attack. When the adversary uses $c(\mathbf{x}^{(t)})$ to infer the membership of $t$, it is sufficient to bound the change of $c(\mathbf{x}^{(t)})$ when $t$ is removed from the training set of $c$. Since we assume direct membership attacks have better performance than indirect ones, $\PDTP_{\A,T}(t)$ is a good estimation of membership privacy risk of $t$. In Section~\ref{sec:case_studies}, we demonstrate the usefulness of PDTP measurements by showing their correlations with the performance of different types of direct membership attacks. 

\section{Case Studies}
\label{sec:case_studies}

DTP measures the sensitivity of a target record $t$ on the predictions of a classifier $\A(T)$. Intuitively, the larger $t$'s influence on the predictions of $\A(T)$ is, the more these predictions leak about $t$. This, in turn, makes $t$ more vulnerable to membership inference attacks. However, to use DTP to calculate the membership risk, we still need to answer the following: \emph{How do we use PDTP to estimate the risk of membership attacks? How accurate are these estimations? What values of PDTP indicate a potential privacy risk?}

In this section, we answer these three questions through a series of experiments on direct membership attacks. 

To demonstrate $\PDTP$'s effectiveness in measuring risks of membership attacks, we study the performance of three types of direct membership attacks on different datasets and classification algorithms. We find that, when a membership inference attack is effective, i.e., the attack accuracy is greater than $0.7$, $\PDTP_{\A,T}(t)$ is highly correlated with the attack's accuracy on $t$, and the correlation is higher for attacks with higher accuracy. 

To identify high-risk records, we use the {\em DTP-$1$ hypothesis}: if a classifier has a DTP value above $1$, it should not be published. Since PDTP is a lower bound for DTP, we use PDTP measurements to identify records that do not satisfy DTP-$1$ criterion and demonstrate effective membership attacks on these records. 

\subsection{Datasets}

We first introduce datasets used in the experiments.

\paragraphbe{UCI Adult Dataset (Adult).} The dataset~\cite{kohavi1996scaling} contains 48,842 records extracted from the 1994 Census Database. Each record has 14 attributes with demographic information such as age, gender, and education. The class attribute is the income class of the individual: $>50$K or $<=50$K. The classification task is to predict an individual's income class based on his demographic information. We use all the features except the final weight (fnlwgt) attribute, which is a weight on the Current Population Survey (CPS) file used for accurate populations estimates. We randomly sample 2,000 records as candidate set $D$, and 1,000 records out of $D$ as training set. 

\paragraphbe{Purchase Dataset (Purchase).} Similar to the purchase dataset in \cite{shokri2016membership}, we construct a dataset containing user's purchase history based on Kaggle's ``acquire valued shoppers'' challenge. The original contains the user's transaction histories, including product category, product brand, purchase quantity, purchase amount, etc. We pre-process the dataset by constructing one record for each customer. We use the product category attribute in the original dataset to create 836 binary feature attributes. Each feature attribute is a product category (e.g., sparkling water), and the value of the attribute is true if and only if the corresponding customer has purchased this product in the past year. We cluster the dataset into 10 clusters using k-means based on Weka's implementation~\cite{arthur2007k}. Each cluster represents a type of consumer buying behavior. We use the cluster index of each record as its class label. The classification task is to predict a consumer's buying behavior based on products he has purchased. We randomly sample 2,000 records as candidate set $D$, and 1,000 records out of $D$ as training set. 

\subsection{Machine Learning Models}

We study the performance of membership attack and PDTP measurements on three different machine learning models. 

\paragraphbe{Neural Networks (NN).} We build a fully connected neural network with one hidden layer of 64 units and a \texttt{LogSoftMax} layer. We use \texttt{Tanh} as the activation function and negative log-likelihood criterion as the classification criterion. We use a learning rate of 0.01 for both datasets. 
The maximum epoch of training is set to 100 for the adult dataset and 30 for the purchase dataset. When preprocessing the adult dataset, we convert categorical attributes into binary attributes and normalize all the numerical attributes. 

\paragraphbe{Naive Bayes Classifiers (NB).} We build a Naive Bayes classifier~\cite{rish2001empirical} to predict the class label based on Bayes Theorem under the assumption of conditional independence. We use Laplace smoothing~\cite{chen1996empirical} to smooth the categorical attributes in the dataset. 

\paragraphbe{Logistic Regressions (LR).} We build a logistic regression model using Weka's implementation of Logistic model trees~\cite{landwehr2005logistic}. 

\paragraph{Binning the Predictions.} We limit the precision of the model outputs using data binning technique with a bin size of 0.01. Since the outputs of the classifiers are probabilities in the range of $\left[0,1\right]$, we divide this range into 100 bins of the same size. Instead of returning the original output of a classifier, we make the model return the center of the bin to which the original output belongs. For example, if a classifier predicts a class label to have 0 probability given a feature vector, the output of the classifier would be 0.005, which is the center of the first bin. This binning technique prevents models from leaking private information that does not significantly contribute to their accuracy. It also prevents PDTP measurements from getting unreasonably large due to close-to-zero denominators.

\subsection{Attacks}

Given a target classifier $c = \A(T)$ and a target record $t$, a membership attack distinguishes between the following hypotheses:
\[ H_0: \ \ t \notin T \qquad \qquad \text{ and } \qquad \qquad H_1: \ \ t \in T \ . \]


In all of the following attacks, we assume that the adversary gets the target classifier's prediction on the target record $c(\mathbf{x}^{(t)}) = \left(\p_{c}(y_1 \mid \mathbf{x}^{(t)}), \p_{c}(y_2 \mid \mathbf{x}^{(t)}), \dots, \p_{c}(y_k \mid \mathbf{x}^{(t)})\right)$, and tries to launch a membership attack on $t$ using this information. We also assume the adversary is powerful enough to know the size of training set $T$ and has access to the candidate dataset $D$ and the classification algorithm $\A$ based on Kerckhoffs's principle. 

Let $q_i = \p_c(y_i \mid \mathbf{x}^{(t)})$ ($1 \leq i \leq k$). Here, $q_i$ represents the class probability for class label $y_i$ predicted by the target classifier given the features of the target record. Therefore, the vector $\mathbf{q} = c(\mathbf{x}^{(t)}) = (q_1, q_2, \dots, q_k)$ can be viewed as a probability distribution over all the possible class labels. 

\paragraphbe{Untargeted Attacks.} We reproduce the membership attack of~\cite{shokri2016membership} on a neural network model learned on the purchase dataset. This attack converts the membership inference problem into a classification task with two class labels: class label ``in'' represents hypothesis $H_1$ ($t \in T$), and class label ``out'' represents hypothesis $H_0$ ($t \notin T$). Concretely, the attack consists of two steps:

\paragraphe{Step 1: Training Shadow Classiffiers.} The adversary trains shadow classifiers to simulate the behavior of the target classifier $\A(T)$. First, he samples $M$ shadow datasets $T_1, T_2, \dots, T_M$ of the same size as the target dataset $T$. Then, he trains $M$ shadow classifiers $\A(T_1), \A(T_2), \dots, \A(T_M)$ using the same classification algorithm as the target classifier $\A(T)$. In experiments, we take $M=20$.

\paragraphe{Step 2: Building the Attack Classifier.} The adversary uses the shadow classifiers to label each record in the candidate dataset $D$ according to Algorithm~\ref{algo:untargeted}. The algorithm takes the shadow classifiers and the candidate dataset as input and outputs a dataset $D_{\rm attack}$, which serves as the training set for the attack classifier. 

\RestyleAlgo{boxruled}
\LinesNumbered
\begin{algorithm}[h]
\caption{Step 2 of Untargeted Attack}
\label{algo:untargeted}
\SetAlgoLined
    \KwIn{A set of shadow classifiers $\{\A(T_1), \A(T_2), \dots, \A(T_M)\}$, candidate set $D$}
    \KwOut{Training set of the attack classifier $D_{\rm attack}$}
    \BlankLine

    $D_{\rm attack} \leftarrow \emptyset$

    \For{$j = 1, 2, \dots, M$}{
        \For{$r \in D$} {
            $\mathbf{q}^{(r)} \leftarrow \left(\p_{\A(T_j)}(y_1 \mid \mathbf{x}^{(r)}),  \dots, \p_{\A(T_j)}(y_k \mid \mathbf{x}^{(r)})\right)$ \\
            $\mathbf{f}^{(r)} \leftarrow \left(\mathbf{x}^{(r)}, y^{(r)}, \mathbf{q}^{(r)}\right)$ \\

            \If{$r \in T_j$}{
                $D_{\rm attack} \leftarrow D_{\rm attack} \bigcup \left\{\left( \mathbf{f}^{(r)}, {\rm in} \right)\right\} $
            }
            \Else{
                $D_{\rm attack} \leftarrow D_{\rm attack} \bigcup \left\{\left( \mathbf{f}^{(r)}, {\rm out} \right)\right\} $
            }
        }
    }
    \Return $D_{\rm attack}$
\end{algorithm}

In experiments, the attack classifier is a fully connected neural network with one hidden layer of 32 units and a \texttt{LogSoftMax} layer. We use \texttt{ReLU} as activation function and negative log-likelihood criterion as classification criterion. We set the learning rate to 0.01 and the maximum epochs of training to 30 iterations. 

\paragraphe{Step 3: Launching the Attack.} Given a target record $t$, the adversary constructs a new feacture vector by concatenating the original feature vector $\mathbf{x}^{(t)}$, the original class label $y^{(t)}$, and the target model's prediction on the target record $c(\mathbf{x}^{(t)})$. That is,
\begin{equation*}
\mathbf{f}_{\rm attack}^{(t)} = \left(\mathbf{x}^{(t)},y^{(t)}, c(\mathbf{x}^{(t)}) \right).
\end{equation*}

The adversary queries the attack classifier with the new feature vector and gets a prediction consisting of two probabilities: $p_{\rm in}$ is the probability of class label ``in'', and $p_{\rm out}$ is the probability of class label ``out''. The advesary accepts hypothesis $H_1$ if, and only if, $p_{\rm in} > p_{\rm out}$. 

We call this type of attack an \emph{untargeted attack} because the attack classifier obtained from step 2 can be used to attack all the records in the candidate dataset. Therefore, when the adversary tries to attack multiple records, he only needs to run step 1 and step 2 once. Step 3 of the attack, which needs to be repeated on each targeted record, has much lower computational overhead. Although this attack is more efficient when the adversary wants to find out {\em any} vulnerable records, it has lower accuracy compared to some of the targeted attacks. 

\paragraphbe{Distance-Based Targeted Attacks.} When the adversary has a specific target record in mind, he can design attacks tuned to perform well only on the target record. We call this type of attacks \emph{targeted attacks}. In a distance-based targeted attack, an adversary uses shadow models to estimate the average predictions of classifiers satisfying hypothesis $H_0$, and those of classifiers satisfying hypothesis $H_1$. Then he calculates the distance between $c(\mathbf{x})$ and the two average predictions and accept the hypothesis under which the average predictions are closer to $c(\mathbf{x})$. Concretely:

\paragraphe{Step 1: Training Shadow Classifiers.}
Let $n = \left| T \right|$. The adversary uniformly samples $M$ datasets $T'_1, T'_2, \dots, T'_M$ of size $n-1$ from $D \setminus \{t\}$, and takes $T_j = T'_j \bigcup \{t\}$ for all $1 \leq j \leq M$. The adversary trains a pair of shadow classifiers $\A(T_j), \A(T'_j)$ for all $1 \leq j \leq M$, and gets their predictions on the target record:
\begin{equation*}
    \mathbf{p}^{\rm in}_j = \left( \p_{\A(T_j)}(y_1 \mid \mathbf{x}), \p_{\A(T_j)}(y_2 \mid \mathbf{x}), \dots, \p_{\A(T_j)}(y_k \mid \mathbf{x}) \right),
\end{equation*}
and
\begin{equation*}
    \mathbf{p}^{\rm out}_j = \left( \p_{\A(T'_j)}(y_1 \mid \mathbf{x}), \p_{\A(T'_j)}(y_2 \mid \mathbf{x}), \dots, \p_{\A(T'_j)}(y_k \mid \mathbf{x}) \right).
\end{equation*}
Take $\mean{\mathbf{p}}_{\rm in} = \frac{1}{M} \sum_{j=1}^{M} \mathbf{p}^{\rm in}_j$ and $\mean{\mathbf{p}}_{\rm out} = \frac{1}{M} \sum_{j=1}^{M} \mathbf{p}^{\rm out}_j$. Like the query result $\mathbf{q}$, $\mean{\mathbf{p}}_{\rm in}$ and $\mean{\mathbf{p}}_{\rm out}$ can be viewed as two probability distributions over all the possible class labels. 

\paragraphe{Step 2: Comparing KL-Divergence.}
The KL-Divergence~\cite{mackay2003information} between two distributions $P$ and $Q$ is defined to be
\begin{equation*}
    \kldiv{P}{Q} = \sum_{i} p_i \log \frac{p_i}{q_i}.
\end{equation*}
The adversary infers the membership of $t$ by comparing $\mathbf{q}$'s KL-Divergence to $\mean{\mathbf{p}}_{\rm in}$ and $\mean{\mathbf{p}}_{\rm out}$, and accepts hypothesis $H_1$ if, and only if, $\kldiv{\mathbf{q}}{\mean{\mathbf{p}}_{\rm out}} > \kldiv{\mathbf{q}}{\mean{\mathbf{p}}_{\rm in}}$.

In the experiment, we take $M = 5$. That is, for each target record $t$, we sample $5$ datasets and train $10$ shadow classifiers. $5$ of the shadow classifiers are trained with $t$ in the training set, and the other $5$ are trained without $t$ in the training set. 

\paragraphbe{Frequency-Based Targeted Attacks.} In a frequency-based targeted attack, the adversary trains the same shadow models as in the distance-based membership attack. However, instead of calculating the average of the predictions, the adversary counts the frequency that the predictions of classifiers satisfying hypothesis $H_0$ fall into the same bins as $c(\mathbf{x})$ as well as the frequency that the predictions of classifiers satisfying hypothesis $H_1$ fall into the same bins as $c(\mathbf{x})$. The adversary accepts the hypothesis under which predictions more often fall into the same bin as $c(\mathbf{x})$. 

The first step of a frequency-based targeted attack is the same as the distance-based targeted attack. The adversary trains $2m$ shadow classifiers, $m$ of which with $t$ in training set. In the second step, for $1 \leq i \leq k$, the adversary calculates $o_i^{in}$ as the number of shadow models that are trained with $t$ in training set and gives the same predicted probability on class label $y_i$ as the target dataset. Similarly, $o_i^{out}$ is calculated as the number of shadow models that are trained without $t$ in the training set and gives the same predicted probability on class label $y_i$ as the target dataset.

Finally the adversary estimates the following ratio:
\begin{equation*}
\frac{\Pr \left[ t \in T \mid c(\mathbf{x}) = \mathbf{q} \right]}{\Pr \left[ t \notin T \mid c(\mathbf{x}) = \mathbf{q} \right]} = \prod_{i = 1}^{k} \frac{o_i^{\rm in}}{o_i^{\rm out}}. 
\end{equation*}
The adversary accepts hypothesis $H_1$ iff $\frac{\Pr \left[ t \in T \mid c(\mathbf{x}) = \mathbf{q} \right]}{\Pr \left[ t \notin T \mid c(\mathbf{x}) = \mathbf{q} \right]} > 1$. 

Like for the distance-based membership attack, we take $M=5$ in the experiment and train $10$ shadow models for each target record. 

\begin{table*}[t]
\centering
{ \small
\caption{\small Performance of Three Membership Attacks on NN-Purchase.}
\label{table:nn-purchase}
\begin{tabular}{|c|c|c|c|c|c|c|}
\hline
\textbf{Membership Attack} & \textbf{Attack Accuracy} & \textbf{Attack Precision} & \textbf{Attack Recall} & \textbf{Attack F1 Score} & \textbf{Correlation with PDTP} & \textbf{$p$-Value} \\ 
\hhline{|=|=|=|=|=|=|=|}
Untargeted Attack & 0.6680 & 0.6386 & 0.8500 & 0.7294 & 0.4864 & $2.89 \times 10^{-7}$ \\
\hline
Frequency-Based Attack & 0.6257 & 0.5933 & 0.8253 & 0.7174 & 0.5052 & $8.29 \times 10^{-8}$ \\
\hline
Distance-Based Attack & 0.8533 & 0.8470 & 0.9087 & 0.8768 & 0.7653 & $1.85 \times 10^{-20}$ \\
\hline
\end{tabular}
}
\vspace{-6pt}
\end{table*}

\begin{figure*}
    \begin{subfigure}[b]{0.65\columnwidth}
        \includegraphics[width=\columnwidth]{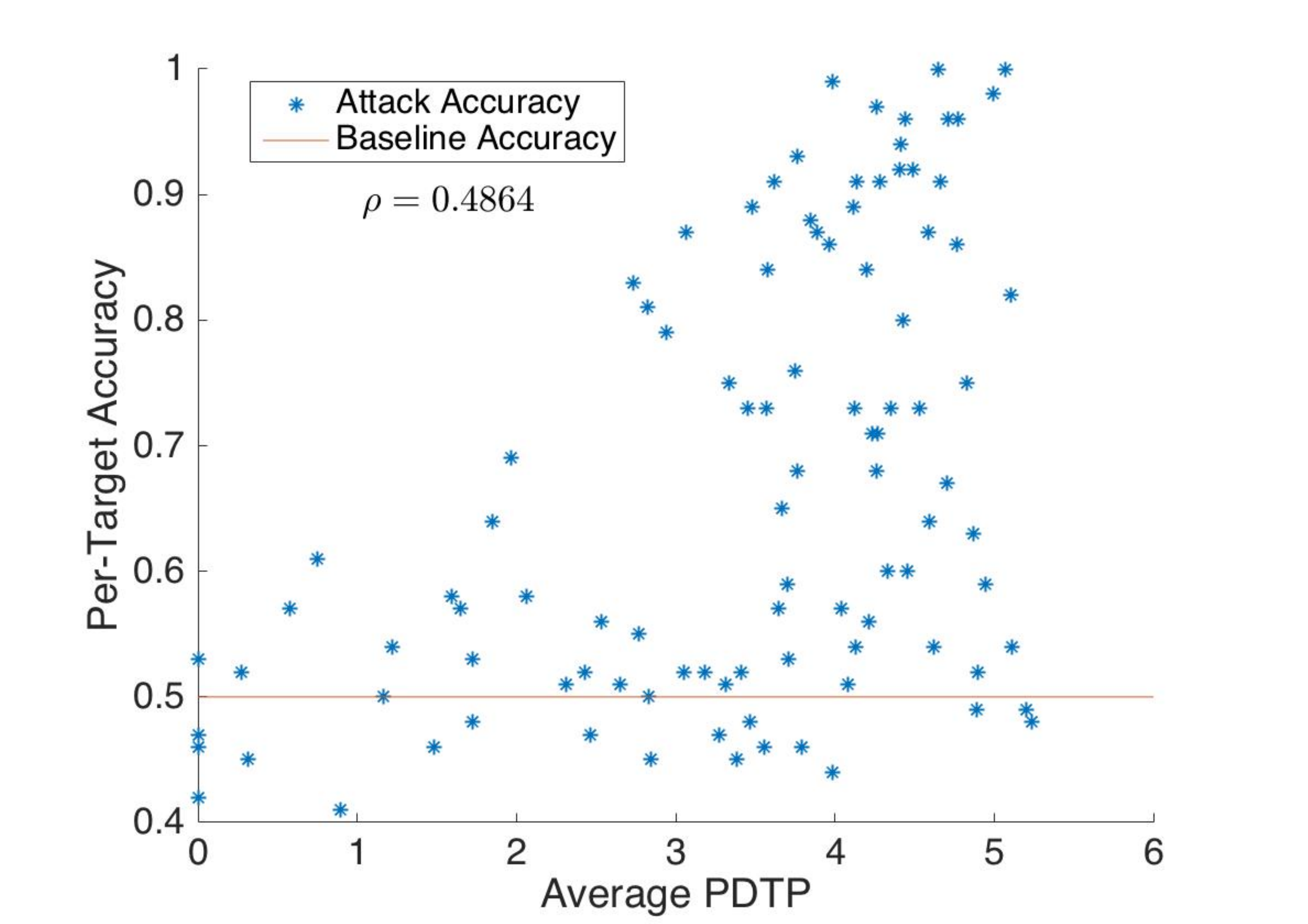}%
        \caption{\small Untargetted Membership Attack on NN-Purchase.}%
        \label{fig:untargeted_nn_purchase}%
    \end{subfigure}\hfill%
    \begin{subfigure}[b]{0.65\columnwidth}
        \includegraphics[width=\columnwidth]{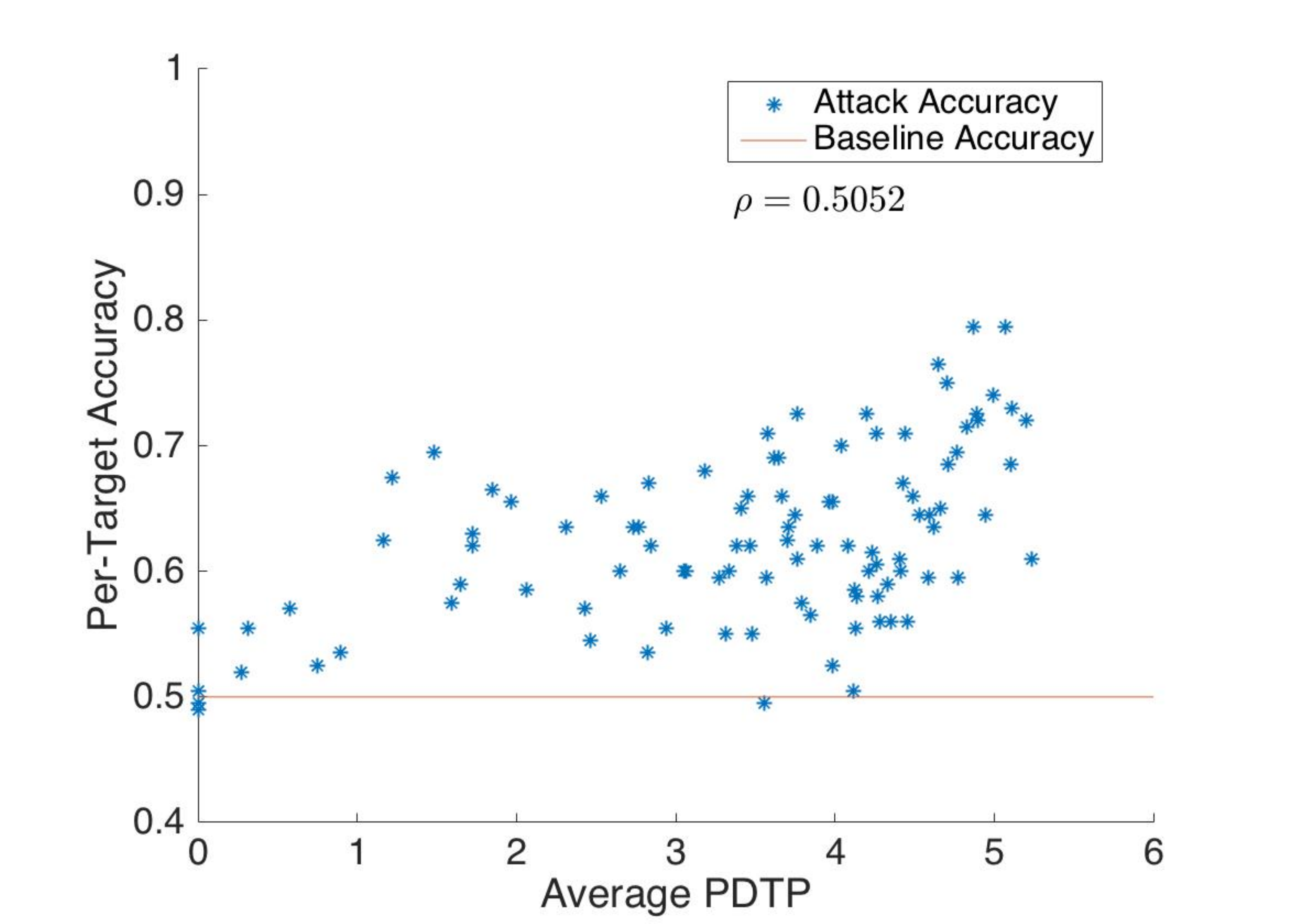}%
        \caption{\small  Frequency-Based Membership Attack on NN-Purchase.}%
        \label{fig:frequency_nn_purchase}%
    \end{subfigure}\hfill%
    \begin{subfigure}[b]{0.65\columnwidth}
        \includegraphics[width=\columnwidth]{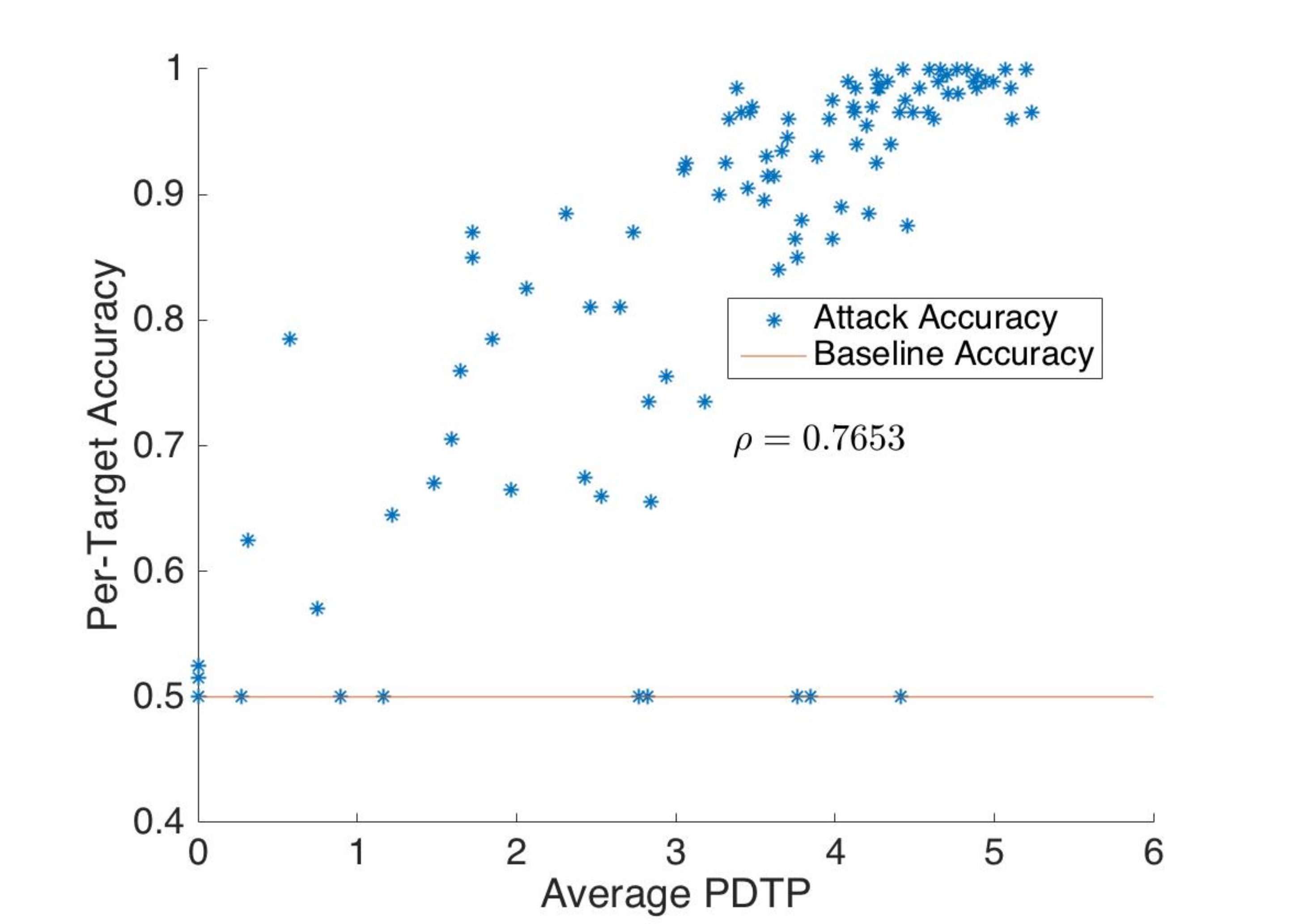}%
        \caption{\small Distance-Based Membership Attack on NN-Purchase.}%
        \label{fig:distance_nn_purchase}%
    \end{subfigure}\hfill%
    \caption{\small Membership Attacks on NN-Purchase. \vspace{-8pt}}%
\end{figure*}

\subsection{Evaluation Metrics}
\paragraphbe{Multiple Iterations of Attacks.} 
To evaluate their performance, we run 100 iterations of each membership attack. In each iteration, we partition the candidate set into two equal-sized parts: $D_1$ and $D_2$. First, we use $D_1$ as training set and $D_2$ as test set. A target classifier $\A(D_1)$ is trained on the training set and available for public queries. We randomly select 100 records out of $D$ as the adversary's target records. For each target record $t$, we run a targeted attack with $\A(D_1)$ as target classifier. In each membership attack, the goal of the adversary is to predict whether $t \in D_1$ by querying $\A(D_1)$. Then, we switch the role of $D_1$ and $D_2$, and use $D_2$ as the training set and $D_1$ as test set. We then repeat the membership attack with $\A(D_2)$ as target classifier. This process ensures that the target record occurs once in the training set and once in the test set in each iteration so that the baseline accuracy is always 0.5. Since we can only calculate PDTP of a record $t$ when $t$ is in the training set, we measure the PDTP of $t$ as $\PDTP_{\A,D_1}(t)$ if $t \in D_1$ and as $\PDTP_{\A,D_2}(t)$ if $t \in D_2$. Hence, in each iteration, we launch two membership attacks and get one PDTP measurement for each record in $D$ based on definition~\ref{def:pdtp}. We repeat this process for 100 iterations. Ideally, one should calculate a record's average PDTP over 100 PDTP measurements. However, PDTP measurements over multiple datasets contain redundant information. As shown in figure~\ref{fig:correlation}, average PDTP taken over 10 measurements has approximately the same correlation with performance of membership inference attacks as average PDTP taken over 100 measurements. Therefore, to save time, we only take PDTP measurements in the first $10$ iterations. For each target record, we use the average of its $10$ PDTP measurements to estimate its overall membership risk.

\paragraphbe{Per-Target Attack Accuracy.} 
We want to analyze the privacy risk of each record in $D$ separately. That is, instead of looking at the membership attack accuracy on each training set, we are interested in the overall attack accuracy on a single record over the 200 membership attacks. Therefore, we propose \emph{per-target attack accuracy} on a record $t$ as the adversary's proportion of correct membership inference on $t$ over all the attacks performed on $t$. For example, in the experiment, we launch 200 membership attacks on each record in $D$. Therefore, the per-target attack accuracy of a record is the number of correct membership inferences on that record divided by 200. 

\begin{figure}
    \includegraphics[width=0.65\columnwidth]{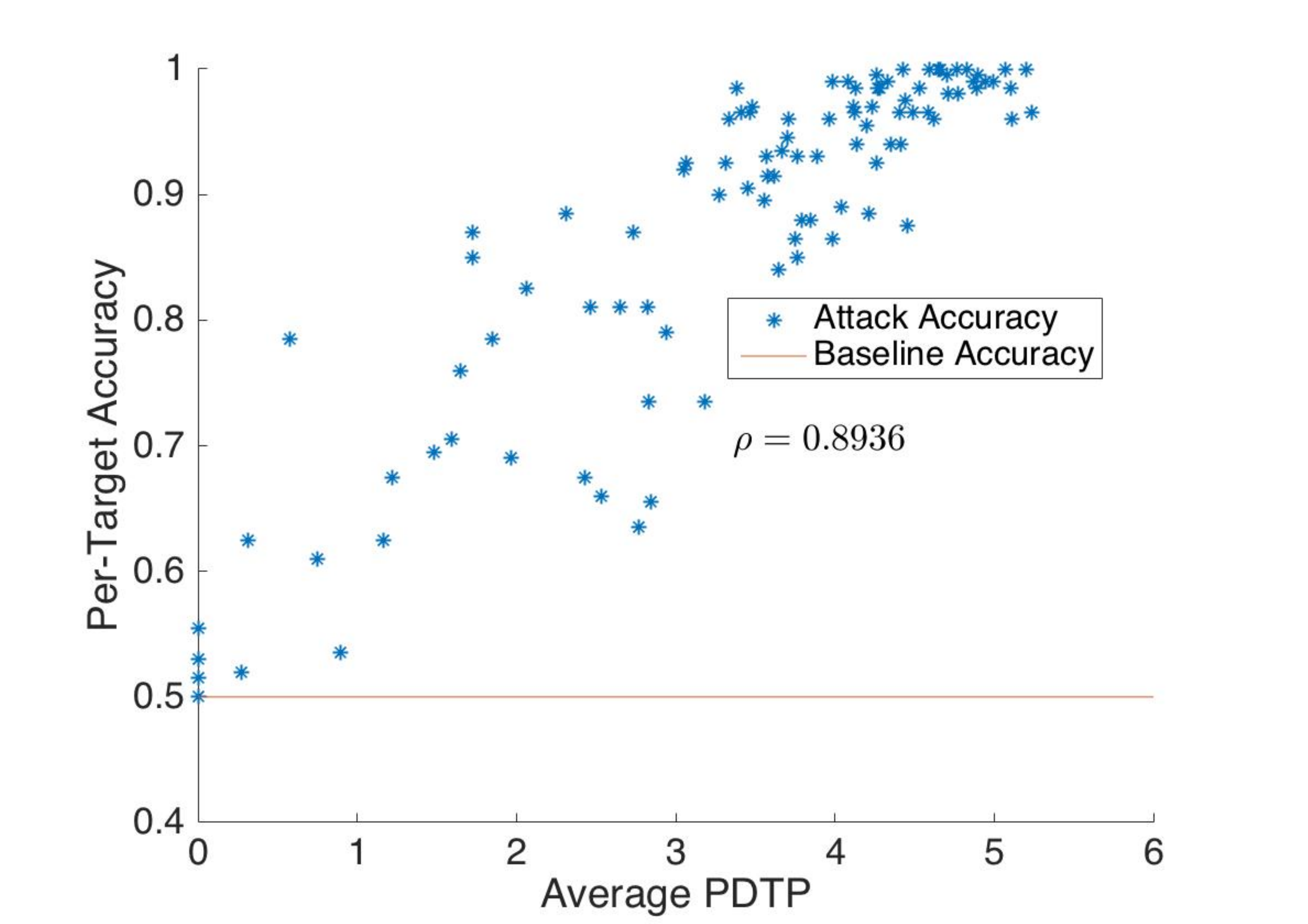}%
    \vspace{-2pt}
    \caption{\small Maximum Per-Target Accuracy of Three Membership Attacks on NN-Purchase. \vspace{-8pt}}%
    \label{fig:max_nn_purchase}%
\end{figure}%

\subsection{Results}
\label{subsec:result}

\paragraphbe{Comparison of Different Attacks.}
First, we compare the performance of three membership attacks on neural networks trained on the purchase dataset. We train 200 neural network models over different training sets sampled from the same candidate set and use them as the target classifiers for membership attacks. All the target classifiers are overfitted to their training sets. The average training accuracy of all the target classifiers is 1, and the average test accuracy of all the target classifiers is 0.6434.

Figures~\ref{fig:untargeted_nn_purchase},~\ref{fig:frequency_nn_purchase}, and~\ref{fig:distance_nn_purchase} show the per-target accuracy and average PDTP of each target record under three types membership attacks. Each point represents one target record in the candidate set. The horizontal axis is the average PDTP measurement of that target over $10$ iterations of PDTP measurements. The vertical axis is the per-target accuracy of that target over 200 repetitions of membership attacks. A point's position on the horizontal axis shows its membership privacy risk estimated by PDTP. According to PDTP measurements, points on the right part of the figures have higher membership privacy risks compared to points on the left part of the figures. A point's position on the vertical axis shows its actual membership privacy risk under a given membership attack. Points on the top part of the figures are more vulnerable to the attack because the attack has higher accuracy on these records. 

For each attack, we calculate the Pearson correlation coefficient between average PDTP and the per-target attack accuracy. We also calculate the p-value for testing the hypothesis of no correlation against the alternative hypothesis that there is a correlation between average PDTP and per-target attack accuracy. Table~\ref{table:nn-purchase} shows the performance of each membership attack and their correlation coefficients with average PDTP. The performance of all three attacks has statistically significant correlations with the average PDTP. Figure~\ref{table:nn-purchase} shows the accuracy of three types of membership attacks and their correlations with PDTP ($\rho$). We observe that attacks with higher accuracy also have higher correlation PDTP measurements. This correlation demonstrates PDTP's ability to identify potential membership privacy risks effectively. 

Overall, distance-based targeted attacks have the highest accuracy. They outperform the untargeted attacks in the previous work~\cite{shokri2016membership} by approximately $19\%$. However, some records are more vulnerable to some types of membership attacks. For example, one record in the purchase dataset is immune to distance-based membership attacks which only achieve baseline accuracy, whereas the untargeted attack achieves accuracy of 0.94. This example demonstrates the insufficiency of estimating privacy risks based on one type of attack. Even a strong attack may fail to identify some of the vulnerabilities that can be used by other attacks. Figure~\ref{fig:max_nn_purchase} shows the maximum per-target accuracy among three membership attacks. The Pearson correlation coefficient between the maximum accuracy and PDTP measurements is 0.8936. This is very strong correlation. Among all records with PDTP greater than 1, 84.62\% of them have maximum per-target accuracy higher than 0.8, and all of them have maximum per-target accuracy higher than 0.6. This result supports the DTP-1 hypothesis that classifiers with DTP above 1 should not be published.

\begin{figure}
        \includegraphics[width=0.65\columnwidth]{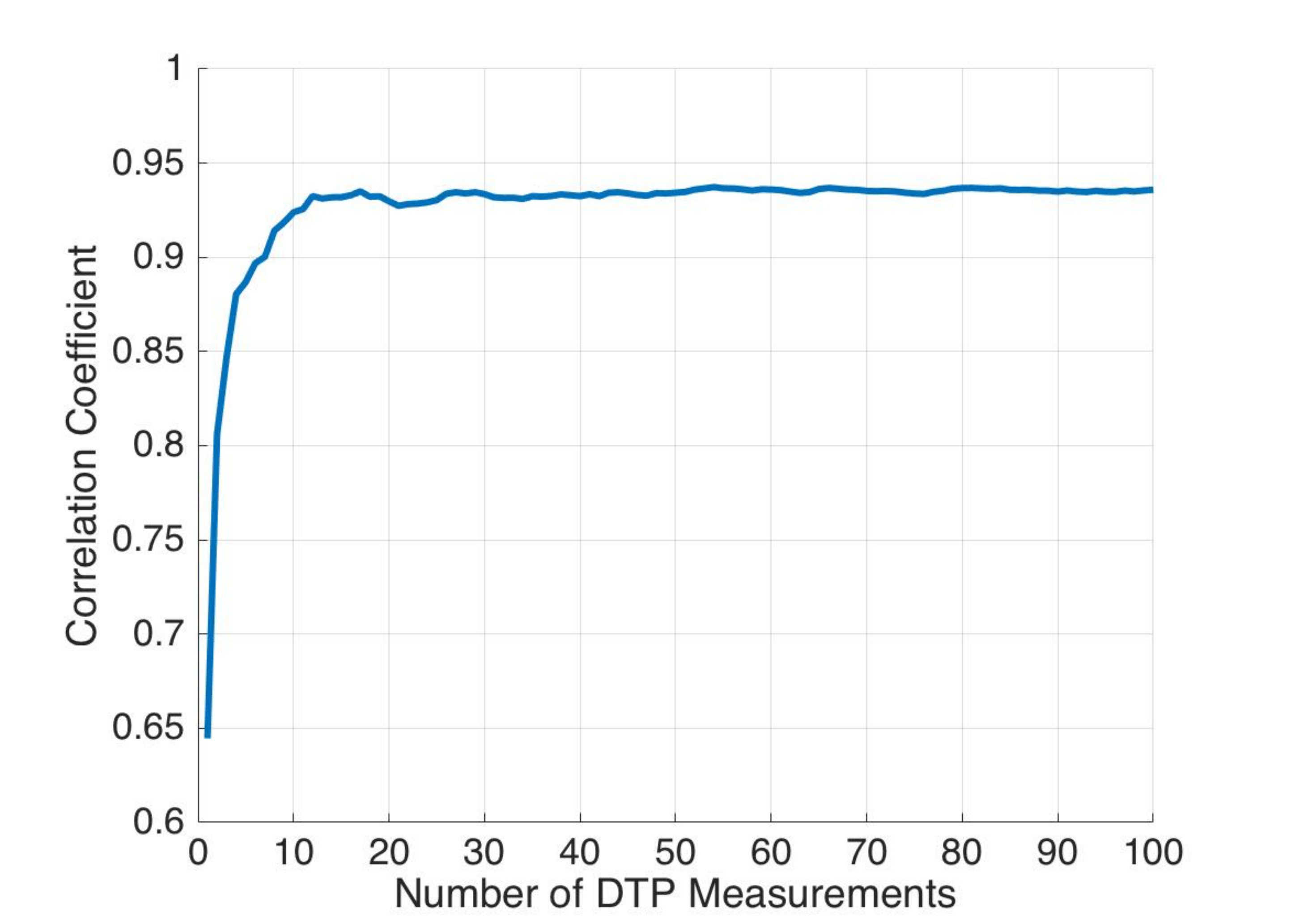}%
        \vspace{-3pt}
        \caption{\small Correlation between Average PDTP and Membership Attack Accuracy. \vspace{-8pt}}%
        \label{fig:correlation}%
\end{figure}

\paragraphbe{Privacy Risks of Different Models and Datasets.}
To compare the privacy risks of different datasets and classifiers, we use PDTP measurements on NN, NB, and LR classifiers learned on the adult and purchase datasets. We use distance-based membership attack because it has higher overall accuracy. Table~\ref{table:distance_attack} shows the performance of each target model, the per-target accuracy of membership attacks, and its correlation with PDTP measurements. 

\begin{table*}[t]
\centering
{ \small
\caption{\small Performance of Distance-Based attack on Different Target Models.}
\label{table:distance_attack}
\begin{tabular}{|c|c|c|c|c|c|c|c|c|c|}
\hline
\textbf{Target} & \textbf{Training} & \textbf{Test} & \textbf{Attack} & \textbf{Attack} & \textbf{Attack} & \textbf{Attack} & \textbf{Average} & \textbf{Correlation} & \textbf{$p$-Value} \\
\textbf{Model} & \textbf{Accuracy} & \textbf{Accuracy} & \textbf{Accuracy} & \textbf{Precision} & \textbf{Recall} & \textbf{F1 Score} & \textbf{PDTP} & \textbf{with PDTP} & \textbf{for Correlation} \\
\hhline{|=|=|=|=|=|=|=|=|=|=|}
NN-Purchase & 1.0000 & 0.6434 & 0.8533 & 0.8470 & 0.9087  & 0.8768 & 3.4019 & 0.7653 & $1.85 \times 10^{-20}$\\
\hline
NB-Purchase & 0.8641 & 0.7204 & 0.5958 & 0.6945 & 0.4038 & 0.5107 & 0.9027 & 0.9239  & $1.16 \times 10^{-42}$ \\
\hline
LR-Purchase & 1.0000 & 0.6241 & 0.7888 & 0.7314 & 0.9187 & 0.8144 & 2.8917  & 0.8138 & $7.78 \times 10^{-25} $\\
\hline
NN-Adult & 0.8555 & 0.8566 & 0.5340 & 0.5311 & 0.4402 & 0.4356 & 0.5847 & 0.4588 & $1.57 \times 10^{-6}$ \\
\hline
NB-Adult & 0.8453 & 0.8410 & 0.5128 & 0.5876 & 0.1027 & 0.1748 & 0.0299 & 0.5166 & $3.76 \times 10^{-8}$ \\
\hline
LR-Adult & 0.8711 & 0.8536 & 0.5134 & 0.5130 & 0.3818 & 0.4378 & 0.1460 & -0.0008 & $0.9343$ \\
\hline
\end{tabular}
}
\vspace{-6pt}
\end{table*}

The results of membership attacks on NB and LR models trained on the purchase dataset are shown in Figure~\ref{fig:distance_nb_purchase} and Figure~\ref{fig:distance_lr_purchase}. Both of the two classifiers have records with PDTP higher than 1, and most of these records are vulnerable to distance-based membership attacks. The accuracy of the attacks is highly correlated with PDTP measurements. 

\begin{figure*}
    \begin{subfigure}[b]{0.65\columnwidth}
    \includegraphics[width=\columnwidth]{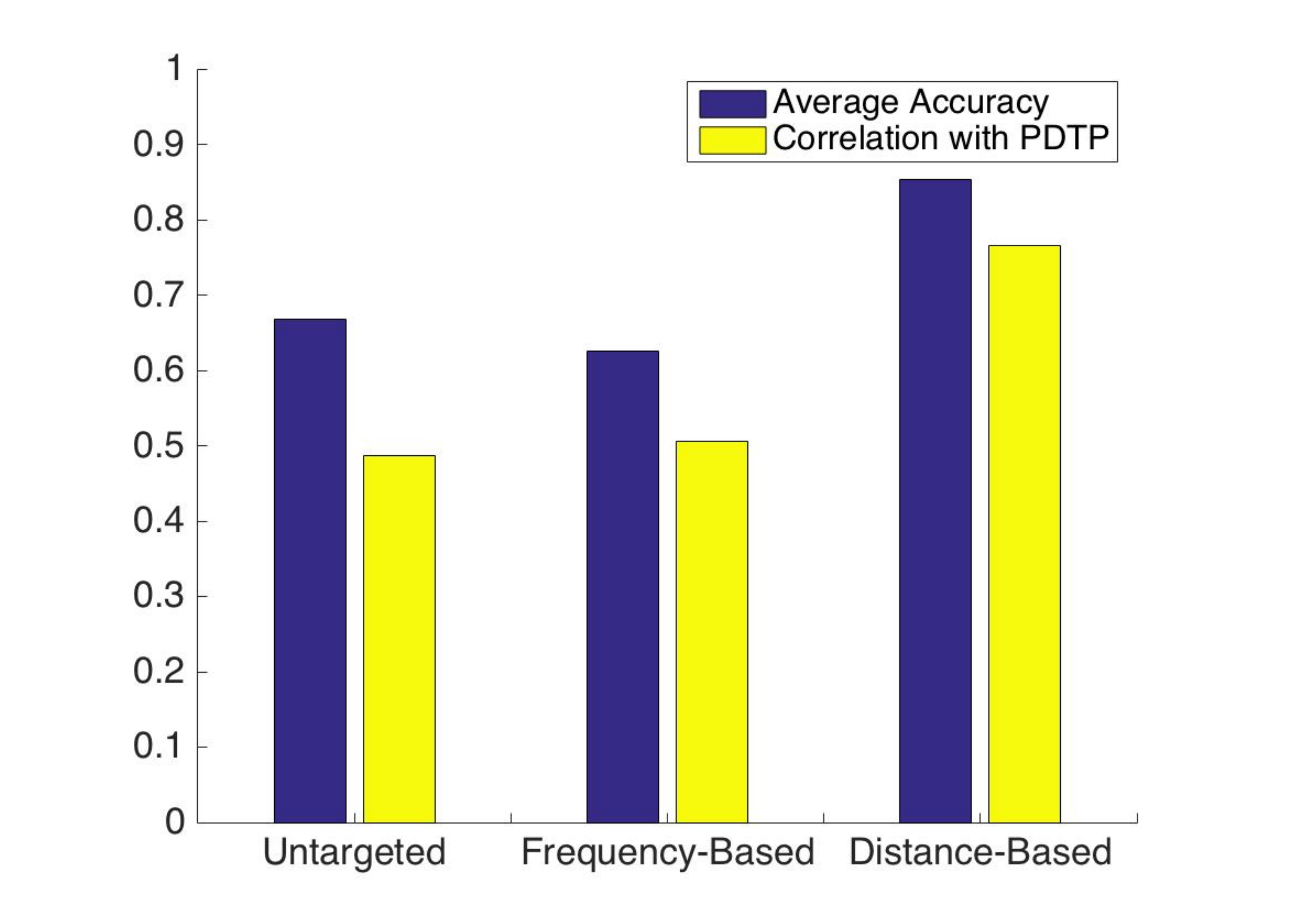}%
    \caption{Comparison between Three Membership Attacks on NN-Purchase10}%
    \label{fig:comparison}%
    \end{subfigure}\hfill%
    \begin{subfigure}[b]{0.65\columnwidth}
        \includegraphics[width=\columnwidth]{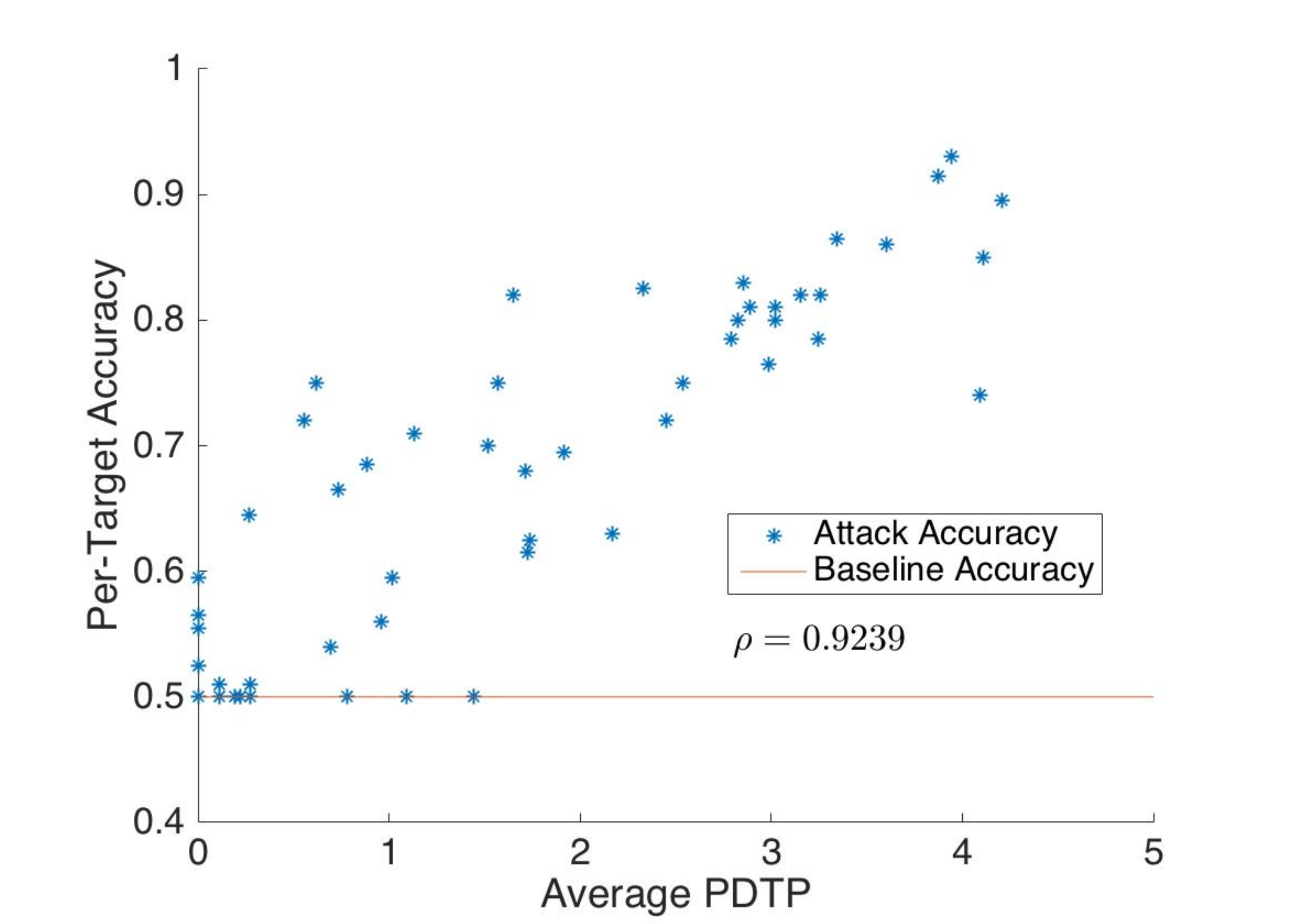}%
        \caption{Distance-Based Membership Attack on NB-Purchase}%
        \label{fig:distance_nb_purchase}%
    \end{subfigure}\hfill%
    \begin{subfigure}[b]{0.65\columnwidth}
         \includegraphics[width=\columnwidth]{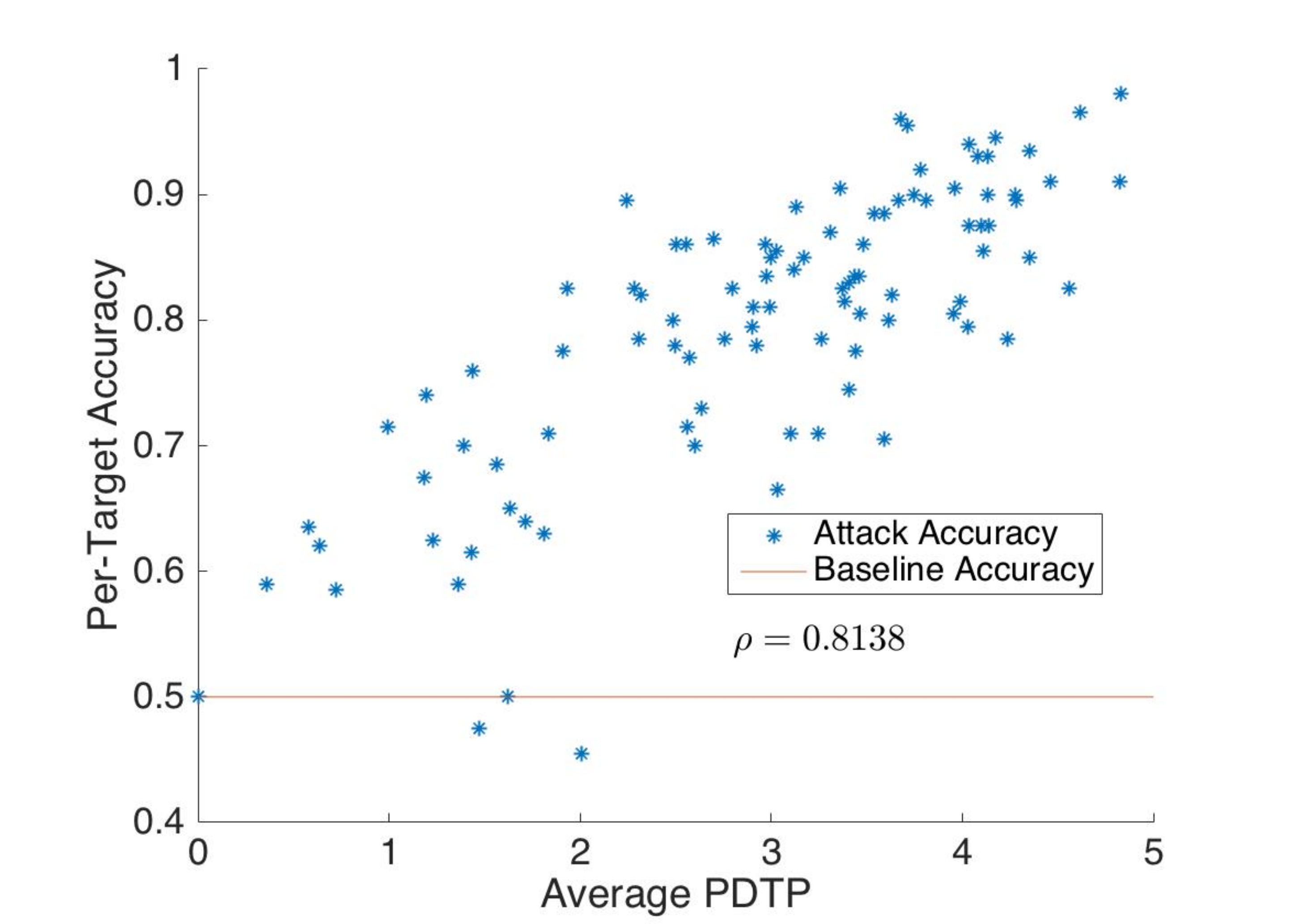}%
         \caption{Distance-Based Membership Attack on LR-Purchase}%
         \label{fig:distance_lr_purchase}%
    \end{subfigure}\hfill%
\caption{\small Membership Attacks on classifiers learned on purchase dataset}%
\end{figure*}

The results of membership attacks on NN, NB, and LR models trained on the adult dataset are shown in Figure~\ref{fig:distance_nn_adult}, Figure~\ref{fig:distance_nb_adult}, and Figure~\ref{fig:distance_lr_adult}. Unlike the purchase dataset, the adult dataset has fewer classes and features which help improve the generalizability of models learned on this dataset. The training and test accuracy reflects good generalizability of all three models learned on the adult dataset. The distance-based membership attacks also have worse performance on the adult dataset, indicating better membership privacy. However, even if the average PDTP measurement is relatively low for all three models, the PDTP for some records is greater than $1$ with the neural network model learned on the adult dataset indicating high membership privacy risk. This risk is also reflected by the high per-target accuracy of distance-based membership attack on some of the records with high PDTP. Therefore, good generalizability is not always sufficient for protecting membership privacy. It is possible that a model is not overfitted on the training set, but still captures some private information of some records in the training set. However, this privacy risk can be discovered by measuring PDTP for each record in the training set. 

For NB and LR models trained on the adult dataset, we did not find any records with PDTP greater than 1, and the per-target attack accuracy of the distance-based attack is smaller than $70\%$ for all target records. This result shows that state-of-art membership inference attacks do not work well on these models, and the PDTP measurements do not indicate high privacy risk for any of the records. The correlations between PDTP measurements and per-target attack accuracy is lower compared to attacks with better performance.

\paragraphbe{Multiple PDTP Measurements.}
In the previous experiments, for each target $t$, we use the average of 10 PDTP measurements on $t$ to estimate the PDTP of $t$ over all the target classifiers. To study how the number of PDTP measurements influence our estimation of the membership privacy risk of $t$, we take the Naive Bayes classifier trained on the purchase dataset as the target model perform a distance-based membership attack. We gradually increase the number of PDTP measurements from 1 to 100 and calculate the average PDTP's correlation with the accuracy of membership attacks. Figure~\ref{fig:correlation} shows that the correlation between average PDTP and accuracy of membership attacks increases as we increase the number of PDTP measurements. The correlation coefficient stabilizes after around 10 measurements. 

\begin{figure*}
    \begin{subfigure}[b]{0.65\columnwidth}
        \includegraphics[width=\columnwidth]{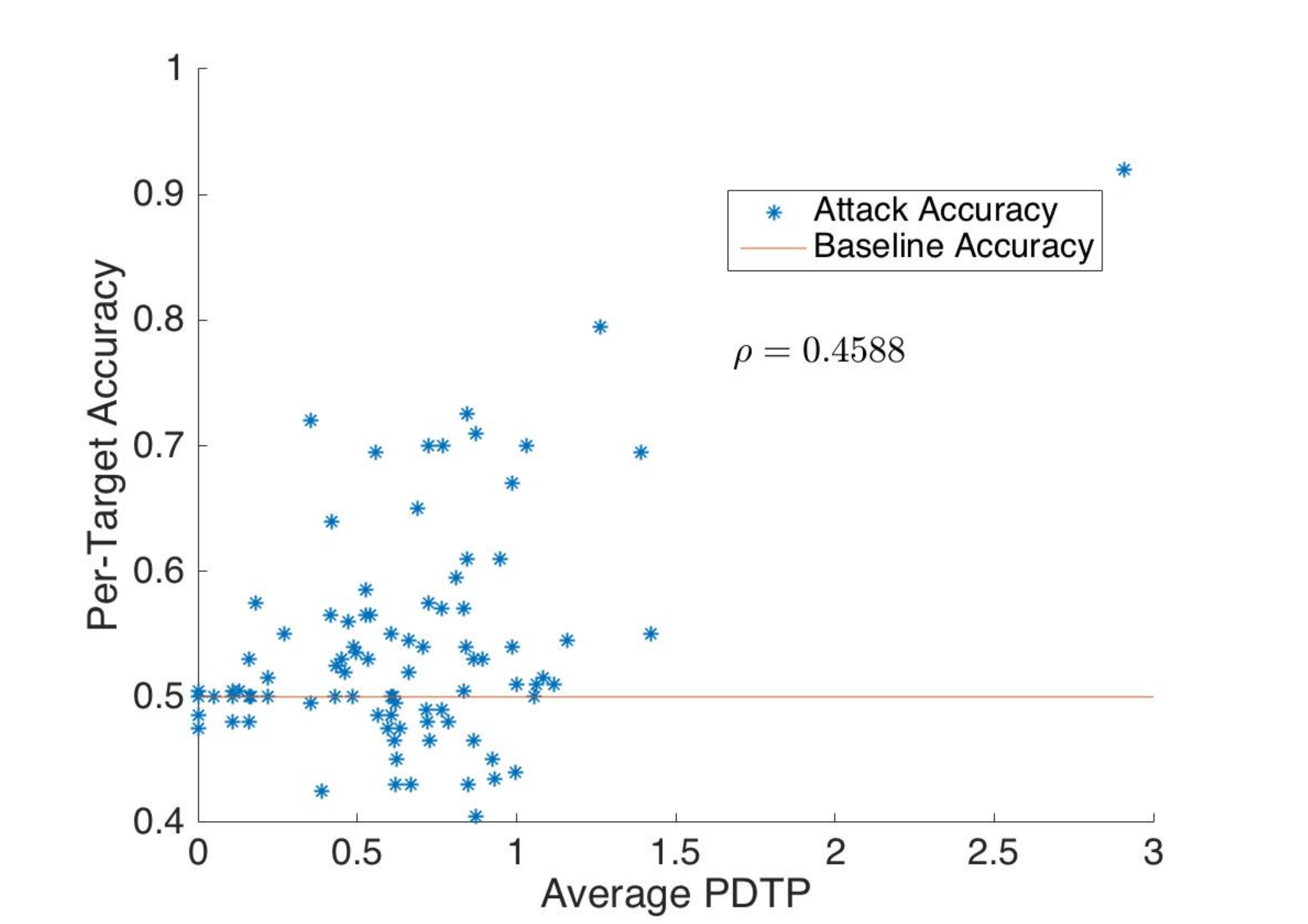}%
        \caption{\small Distance-Based Membership Attack on NN-Adult.}%
        \label{fig:distance_nn_adult}%
    \end{subfigure}\hfill%
    \begin{subfigure}[b]{0.65\columnwidth}
        \includegraphics[width=\columnwidth]{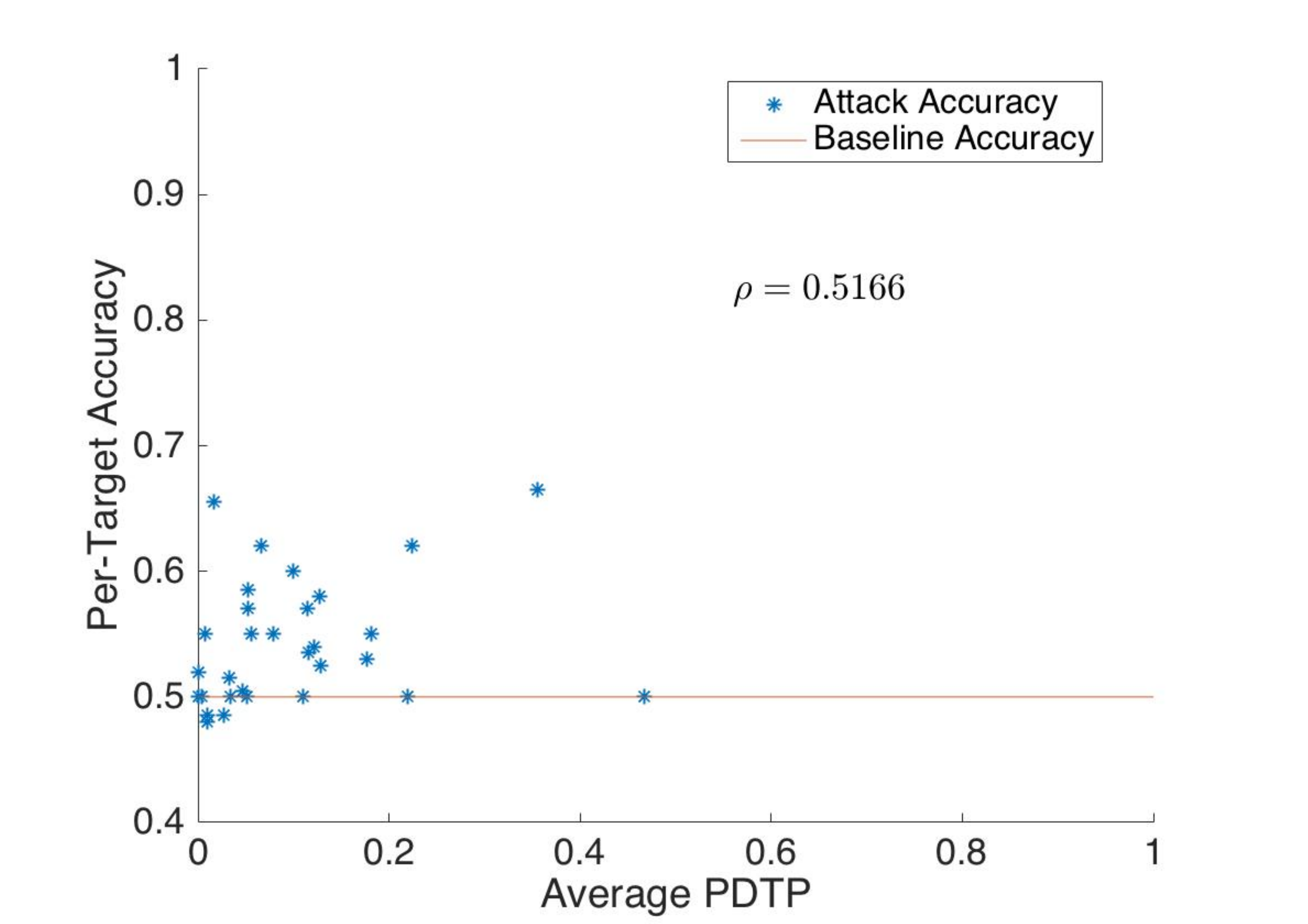}%
        \caption{\small Distance-Based Membership Attack on NB-Adult}%
        \label{fig:distance_nb_adult}%
    \end{subfigure}\hfill%
    \begin{subfigure}[b]{0.65\columnwidth}
         \includegraphics[width=\columnwidth]{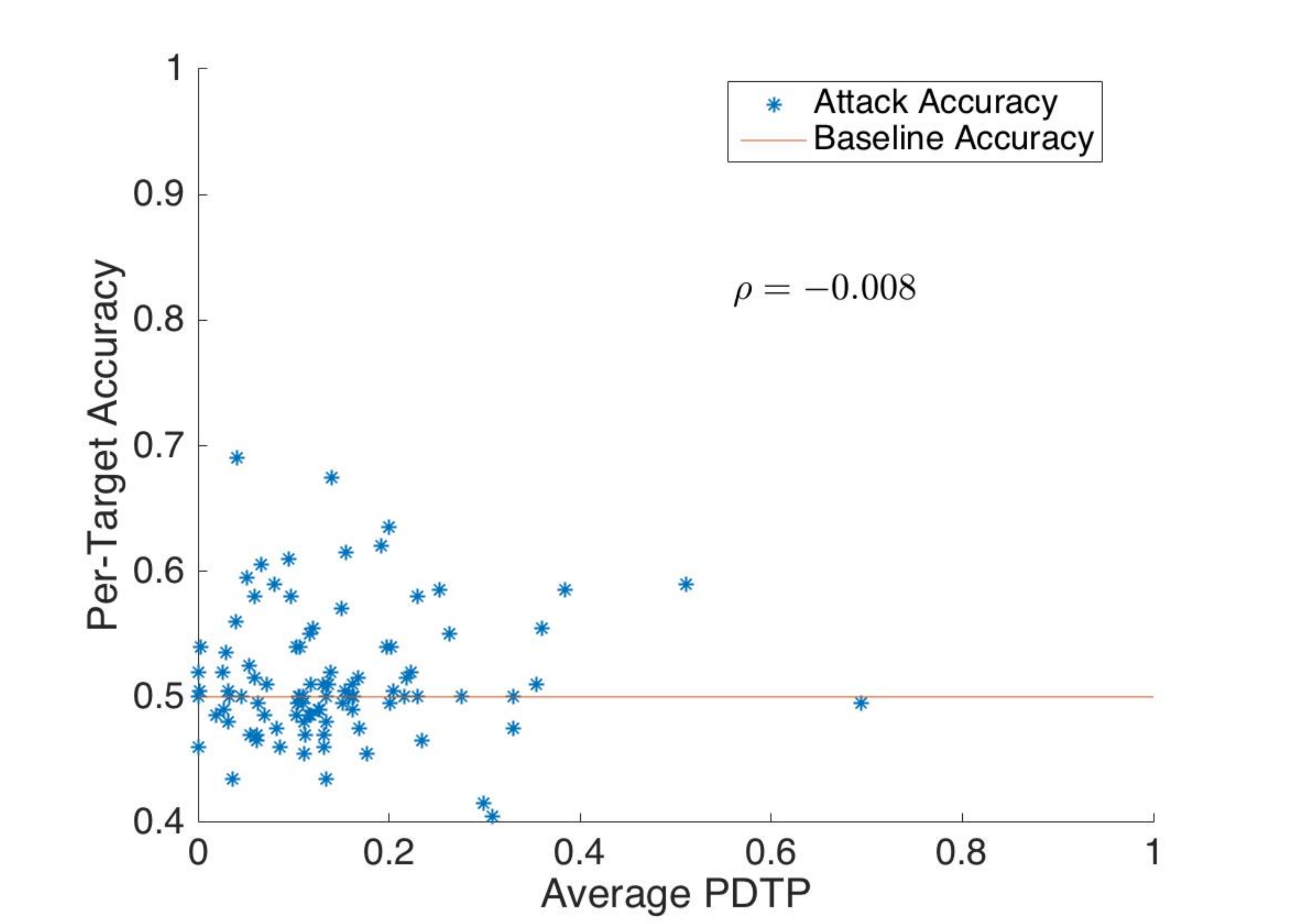}%
         \caption{\small Distance-Based Membership Attack on LR-Adult.}%
         \label{fig:distance_lr_adult}%
    \end{subfigure}\hfill%
\caption{\small Membership Attacks on classifiers learned on adult dataset. \vspace{-8pt}}%
\end{figure*}

\section{Protections against Indirect Membership Attacks}
\label{sec:measure_dtp}

In this section, we investigate the risk of indirect membership attacks where the adversary queries the classifier for features other than the target record.

\subsection{Risk}

In the previous experiments, we assume that the best way of doing a membership attack is to launch a direct attack by querying the target record. However, is it possible that, for some classifier $c$, there exists a query $\mathbf{x} \neq \mathbf{x}^{(t)}$ so that $c(\mathbf{x})$ leaks more private information than $c(\mathbf{x}^{(t)})$? That is, can an indirect membership attack outperform any direct membership attacks? Although it is hard to design a good indirect membership attack for classifiers discussed in Section~\ref{sec:case_studies}, this risk of indirect membership attacks can be demonstrated with a carefully designed classifier that encodes membership information of one specific record. 

Let $c = \A(T)$ be a classifier learned on a training set $T$ with machine learning algorithm $\A$. Instead of releasing $c$, we construct a classifier $c^*$ as follows: 
\begin{equation*}
\begin{aligned}
	c^*(\mathbf{x}) = 
	\begin{cases}
	c(\mathbf{x}) \qquad & \mbox{if} \quad \mathbf{x} \neq \mathbf{0} \\
	\mathbf{1} \qquad & \mbox{if} \quad \mathbf{x} = \mathbf{0} \quad \mbox{and} \quad t \in T \\
	\mathbf{0} \qquad & \mbox{if} \quad \mathbf{x} = \mathbf{0} \quad \mbox{and} \quad t \notin T.
	\end{cases}
\end{aligned}
\end{equation*}

Assume $\mathbf{x}^{(t)} \neq \mathbf{0}$, apparently, an indirect membership attack with query $\mathbf{x} = \mathbf{0}$ gives more information about the target $t$ compared to a direct membership attack with query $\mathbf{x}^{(t)}$. This example shows that for some classifiers, indirect attacks can outperform direct attacks for some records. Therefore, a record can have high membership privacy risk even if its PDTP measurement is low. 

Clearly, $c^*$ is not representative of a real-life machine learning model, especially when the model is trained by the data owner who wants to protect against privacy leakage. However, to achieve a stronger privacy guarantee, we need to study the potential risk of indirect membership attacks and prevent models from leaking ``side channel'' information about records in their training sets.

\subsection{Training Stability}
\label{subsec:training_stable}
To protect against indirect membership attacks, we need a way of calculating $\DTP_{\A,T}(t)$ without the need of brute forcing the whole feature space $X^m$. Since we can already efficiently calculate $\PDTP_{\A,T}(t)$, a natural approach is to bound $\DTP_{\A,T}(t)$ based on $\PDTP_{\A,T}(t)$. We call this property training stability. 

\begin{definition}
(Training Stability)
A classification algorithm $\A $ is \textit{$\delta$-training stable} on dataset $T$ if there exists a constant $\delta > 1$, so that for all $t \in T$ with $p_{A(T \setminus \{t\})}(y^{(t)} \mid \mathbf{x}^{(t)}) > 0$ and  $p_{A(T)}(y^{(t)} \mid \mathbf{x}^{(t)}) > 0$, for all $\mathbf{x} \in X^{m}$, for all $y \in Y$, let
\begin{displaymath}
\gamma_t = \max(\delta,\frac{p_{\A(T)}(y^{(t)} \mid \mathbf{x}^{(t)})}{p_{A(T \setminus \{t\})}(y^{(t)} \mid \mathbf{x}^{(t)})}, \frac{p_{\A(T \setminus \{t\})}(y^{(t)} \mid \mathbf{x}^{(t)})}{p_{A(T)}(y^{(t)} \mid \mathbf{x}^{(t)})}),
\end{displaymath}
we have 
\begin{displaymath}
p_{\A(T)}(y \mid \mathbf{x}) \le \gamma_t p_{A(D \setminus \{t\})}(y \mid \mathbf{x}),
\end{displaymath}
and
\begin{displaymath}
p_{\A(T)}(y \mid \mathbf{x}) \ge \gamma_t^{-1}p_{A(D \setminus \{t\})}(y \mid \mathbf{x}).
\end{displaymath}
\end{definition}

Given an algorithm that is $\delta$-training stable on $T$, for all $t \in T$, the ratio between the predictions of two classifiers $\A(T)$ and $\A(T \setminus \{t\})$ is bounded either by the ratio between their predictions on the query $(\mathbf{x}^{(t)}, y^{(t)})$ or by a parameter $\delta$.

If an algorithm $\A$ is $\delta$-training stable on $T$, $\DTP_{\A, T}(t)$ can be calculated by measuring $\PDTP_{\A,T}(t)$, which is much more efficient.

\begin{theorem}
\label{theorem:stability}
If a record $t \in T$ is $\epsilon$-PDTP with classification algorithm $\A$ and dataset $T$, and $\A$ is $\delta$-training stable on $T$, we have $t$
is $\epsilon'$-DTP with $\A$ and $T$, where $\epsilon' = \max(\epsilon, \ln\delta)$.
\end{theorem}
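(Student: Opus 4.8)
The plan is to show that the training-stability guarantee, which already controls the ratio of predictions of $\A(T)$ and $\A(T\setminus\{t\})$ over the \emph{entire} feature space in terms of a single multiplier $\gamma_t$, collapses to the desired $\epsilon'$-DTP bound once $\gamma_t$ is pinned down using the PDTP hypothesis. The key observation is that $\gamma_t$ is defined as the maximum of $\delta$ and the two reciprocal ratios of $\A(T)$ and $\A(T\setminus\{t\})$ evaluated at the single query $(\mathbf{x}^{(t)}, y^{(t)})$, which is exactly the quantity that $\epsilon$-PDTP constrains.

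First I would unpack the PDTP hypothesis at the point $(\mathbf{x}^{(t)}, y^{(t)})$. Since $t$ is $\epsilon$-PDTP, Definition~\ref{def:pdtp} applied to the class label $y = y^{(t)}$ gives
\begin{displaymath}
\frac{p_{\A(T)}(y^{(t)} \mid \mathbf{x}^{(t)})}{p_{\A(T\setminus\{t\})}(y^{(t)} \mid \mathbf{x}^{(t)})} \le e^{\epsilon} \quad\text{and}\quad \frac{p_{\A(T\setminus\{t\})}(y^{(t)} \mid \mathbf{x}^{(t)})}{p_{\A(T)}(y^{(t)} \mid \mathbf{x}^{(t)})} \le e^{\epsilon}.
\end{displaymath}
Both ratios appearing inside the definition of $\gamma_t$ are therefore at most $e^{\epsilon}$, so $\gamma_t$, being the maximum of $\delta$ and these two ratios, satisfies $\gamma_t \le \max(\delta, e^{\epsilon})$. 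Writing $\delta = e^{\ln\delta}$ and using monotonicity of the exponential, this is exactly $\gamma_t \le e^{\max(\epsilon,\, \ln\delta)} = e^{\epsilon'}$.

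Next I would feed this bound on $\gamma_t$ into the training-stability guarantee. Because $\A$ is $\delta$-training stable on $T$, for all $\mathbf{x} \in X^m$ and all $y \in Y$ we have $p_{\A(T)}(y \mid \mathbf{x}) \le \gamma_t\, p_{\A(T\setminus\{t\})}(y \mid \mathbf{x})$ and $p_{\A(T)}(y \mid \mathbf{x}) \ge \gamma_t^{-1}\, p_{\A(T\setminus\{t\})}(y \mid \mathbf{x})$. Substituting $\gamma_t \le e^{\epsilon'}$ (hence $\gamma_t^{-1} \ge e^{-\epsilon'}$) yields
\begin{displaymath}
e^{-\epsilon'}\, p_{\A(T\setminus\{t\})}(y \mid \mathbf{x}) \le p_{\A(T)}(y \mid \mathbf{x}) \le e^{\epsilon'}\, p_{\A(T\setminus\{t\})}(y \mid \mathbf{x}),
\end{displaymath}
which is precisely the pair of inequalities defining $\epsilon'$-DTP in Definition~\ref{def:dtp}. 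Since these hold for every $\mathbf{x}$ and $y$, the record $t$ is $\epsilon'$-DTP with $\A$ and $T$, completing the argument.

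I expect no serious obstacle: the entire content is the definitional bookkeeping that identifies the ratios inside $\gamma_t$ with the PDTP quantity at $(\mathbf{x}^{(t)}, y^{(t)})$. The one point to handle with care is the positivity caveat in the definition of training stability. I would remark that the assumptions $p_{\A(T)}(y^{(t)} \mid \mathbf{x}^{(t)}) > 0$ and $p_{\A(T\setminus\{t\})}(y^{(t)} \mid \mathbf{x}^{(t)}) > 0$ are exactly what make the ratios in $\gamma_t$ (and hence the invoked PDTP ratios) well-defined, so applying Definition~\ref{def:pdtp} at $(\mathbf{x}^{(t)}, y^{(t)})$ is legitimate; degenerate cases where $p_{\A(T\setminus\{t\})}(y \mid \mathbf{x}) = 0$ at other queries need no separate treatment, since the training-stability inequality $p_{\A(T)}(y \mid \mathbf{x}) \le \gamma_t\, p_{\A(T\setminus\{t\})}(y \mid \mathbf{x})$ already forces $p_{\A(T)}(y \mid \mathbf{x}) = 0$ there, and the $\epsilon'$-DTP bounds hold trivially.
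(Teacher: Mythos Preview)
Your proposal is correct and follows essentially the same approach as the paper: bound the ratios in $\gamma_t$ by $e^{\epsilon}$ via PDTP, deduce $\gamma_t \le \max(\delta, e^{\epsilon}) = e^{\epsilon'}$, and then invoke training stability to obtain the DTP inequalities for all $(\mathbf{x}, y)$. Your write-up is in fact more explicit than the paper's (which jumps directly to ``for $\gamma = \max(\delta, e^{\epsilon})$'' without spelling out the intermediate bound on $\gamma_t$), and your remark on the positivity caveat is a nice touch absent from the original.
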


\begin{proof}
See Appendix~\ref{appendix:proof}.
\end{proof}

On the one hand, $\delta$-training stability is a desirable property from a privacy perspective because it reduces the computational cost of estimating the influence of an individual record on the learned classifier. On the other hand, $\delta$-training stability is also a metaphor of learning in real life. For example, If a professor explains an example question in class, he expects the students to do well on similar questions in the exam. If the exam contains a question that is the same as the example question explained in class, most students are expected to answer it correctly. Similarly, suppose we have a classifier $\A(T \setminus \{t\})$ and an additional record $t$. By adding $t$ into the training dataset, we expect the classifier to have better performance on $t$ or records similar to $t$. This can be viewed as a metaphor of learning in real life. 

\subsection{Training Stability of Classifiers}
With the aforementioned intuitions in mind, we study the training stability of some commonly used classifiers. However, due to the complexity and variability of different machine learning algorithms, we cannot cover all well-known classifiers in this section. Table~\ref{table:training_stability} shows the training stability of some commonly used classifiers. 

\paragraphbe{Bayes Inference Classifiers.} 
For a Bayes inference classifier $\A(T)$, the prediction $p_{\A(T)}(y \mid \mathbf{x})$ is given by the conditional probability of class label $y$ given feature vector $\mathbf{x}$ in the training dataset $T$. 

\begin{proposition}
\label{prop:bi}
Bayes inference algorithm is $\delta$-training stable for $\delta = \frac{4}{3}$ on any training dataset.
\end{proposition}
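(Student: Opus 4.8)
The plan is to use that a Bayes inference classifier simply reports empirical conditional frequencies, so deleting the single record $t=(\mathbf{x}^{(t)},y^{(t)})$ can change a prediction $p_{\A(T)}(y\mid\mathbf{x})$ only when the queried feature vector is $\mathbf{x}=\mathbf{x}^{(t)}$. Writing $n$ for the number of records of $T$ whose feature vector equals $\mathbf{x}^{(t)}$ and $n_y$ for the number of those also labelled $y$, the first step is to observe that for every $\mathbf{x}\neq\mathbf{x}^{(t)}$ the counts feeding $p_{\A(T)}(y\mid\mathbf{x})$ and $p_{\A(T\setminus\{t\})}(y\mid\mathbf{x})$ are identical, so the two predictions coincide and their ratio is exactly $1$; since $\gamma_t\ge\delta>1$, both required inequalities hold trivially. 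This reduces the proposition to the single query $\mathbf{x}=\mathbf{x}^{(t)}$.

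On that query I would split on the label. For $y=y^{(t)}$ the ratio $p_{\A(T)}(y^{(t)}\mid\mathbf{x}^{(t)})/p_{\A(T\setminus\{t\})}(y^{(t)}\mid\mathbf{x}^{(t)})$ and its reciprocal are exactly two of the three quantities whose maximum defines $\gamma_t$, so both required inequalities hold by construction and there is nothing to prove. In the unsmoothed reading the positivity hypotheses $p_{\A(T)}(y^{(t)}\mid\mathbf{x}^{(t)})>0$ and $p_{\A(T\setminus\{t\})}(y^{(t)}\mid\mathbf{x}^{(t)})>0$ force $n_{y^{(t)}}\ge 2$, hence $n\ge 2$, ruling out the most degenerate support. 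All the remaining content therefore concentrates in the off-target labels $y\neq y^{(t)}$, for which removing $t$ leaves the numerator count $n_y$ unchanged but drops the total support by one.

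For such a label the ratio $p_{\A(T\setminus\{t\})}(y\mid\mathbf{x}^{(t)})/p_{\A(T)}(y\mid\mathbf{x}^{(t)})$ simplifies, once the common count $n_y$ cancels, to a single factor of the form $n/(n-1)$ (adjusted by the smoothing mass if smoothing is used); the opposite direction is below $1$ and harmless, so it remains only to bound this factor by $\gamma_t$. Since the factor is strictly decreasing in the support $n$, it is largest at the smallest admissible $n$, and the goal is to show that for every admissible configuration the factor is at most $\max(\delta,R)$, where $R=p_{\A(T)}(y^{(t)}\mid\mathbf{x}^{(t)})/p_{\A(T\setminus\{t\})}(y^{(t)}\mid\mathbf{x}^{(t)})$ is the on-target PDTP ratio already present in $\gamma_t$. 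I would do this by writing both the factor and $R$ as explicit functions of $n$ and $n_{y^{(t)}}$: for all but the smallest supports the factor is already at most $\tfrac43=\delta$, and for the remaining small supports one compares directly against $R$. The constant $\tfrac43$ should fall out as the factor's value at the extremal admissible support.

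I expect this off-target comparison to be the main obstacle, because it is the only point at which the classifier's sensitivity on $y^{(t)}$ must be made to control its sensitivity on a different label, and the two are not automatically linked. The delicate part is pinning the extremal configuration and thereby the exact constant: one must determine, over the finitely many small-support configurations left after using $n\ge 2$, precisely which are dominated by the PDTP term $R$ inside $\gamma_t$ and which instead force the bound $\delta=\tfrac43$, and check that no configuration escapes both. I would also confirm here whether the positivity hypotheses alone suffice or whether the smoothing convention is needed to tame the most degenerate single-support case. The untouched queries, the on-target label, and the large-support regime are then routine monotonicity estimates.
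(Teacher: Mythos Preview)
Your proposal follows essentially the same route as the paper's proof: reduce to the single query $\mathbf{x}=\mathbf{x}^{(t)}$ since all other predictions are unchanged, observe that the on-target label $y=y^{(t)}$ is absorbed by the definition of $\gamma_t$, and then for off-target labels show that the ratio $n_{\mathbf{x}^{(t)}}/(n_{\mathbf{x}^{(t)}}-1)$ is bounded by $\max\bigl(\tfrac43,\,R\bigr)$ with $R$ the on-target PDTP ratio. The paper carries out your extremal step by writing $a=n_{\mathbf{x}^{(t)},y^{(t)}}$, $b=n_{\mathbf{x}^{(t)}}$ and solving the quadratic inequality $\tfrac{b}{b-1}>\tfrac{a(b-1)}{b(a-1)}$ to pin down the threshold that forces $b\ge 4$, which is exactly the ``small-support versus large-support'' split you anticipate; your caution about whether every small configuration is covered is well placed, as the paper's argument tacitly restricts to $n_{\mathbf{x},y}>1$.
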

\begin{proof}
See Appendix~\ref{appendix:proof}.
\end{proof}

\paragraphbe{Naive Bayes Classifiers.} 
Naive Bayes classifiers make predictions using Bayes theorem and assume conditional independence~\cite{rish2001empirical}. 

\begin{proposition}
\label{prop:nb}
Let $T$ be a training dataset with $m$ features and $n$ examples. Let $y_{\min}$ be the least supported class label in $T$.  Let $n_{y_{\min}}$ be the number of examples with class label $y_{\min}$.  Naive Bayes algorithm is $\delta$-training stable for 
\begin{displaymath}
\delta = \left(\frac{n_{y_{\min}}}{n_{y_{\min}} - 1}\right)^{m-1} \frac{n}{n - 1}.
\end{displaymath}
\end{proposition}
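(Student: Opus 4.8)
The plan is to compute the Naive Bayes posterior ratio exactly and reduce it to a one-dimensional quantity. Write $T' = T\setminus\{t\}$ and let $s_T(y,\mathbf{x}) = p_{\A(T)}(y)\prod_{j=1}^{m} p_{\A(T)}(x_j\mid y)$ denote the unnormalised score built from the Laplace-smoothed estimates. First I would note that deleting $t = (\mathbf{x}^{(t)}, y^{(t)})$ alters only the total count $n$, the class count $n_{y^{(t)}}$, and the within-class feature counts of $y^{(t)}$; every factor belonging to a class $y'\neq y^{(t)}$ is unchanged apart from the prior denominator $n$. Setting $r(y') = s_T(y',\mathbf{x})/s_{T'}(y',\mathbf{x})$, this yields $r(y') = \frac{n-1}{n}$ for each non-target class and a closed product $r(y^{(t)}) = \frac{n_{y^{(t)}}(n-1)}{n(n_{y^{(t)}}-1)}\prod_{j=1}^{m}\rho_j(\mathbf{x})$ for the target class, where $\rho_j$ is the per-feature conditional ratio. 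The structural identity I would rely on is that the normalised ratio collapses to $\frac{p_{\A(T)}(y\mid\mathbf{x})}{p_{\A(T')}(y\mid\mathbf{x})} = r(y)/\bar r$, with $\bar r = \sum_{y'} r(y')\, p_{\A(T')}(y'\mid\mathbf{x})$ the posterior-weighted mean of the $r(y')$; consequently $\bar r$ lies between $\min_{y'} r(y')$ and $\max_{y'} r(y')$, and the full ratio lies in $[\min r/\max r,\ \max r/\min r]$.

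Next I would prove the per-factor bounds that produce $\delta$. For any feature value differing from $t$'s, the conditional ratio is $\rho_j = \frac{n_{y^{(t)}}-1+v_j}{n_{y^{(t)}}+v_j}<1$ regardless of the specific value, so its reciprocal is at most $\frac{n_{y^{(t)}}}{n_{y^{(t)}}-1}\le\frac{n_{y_{\min}}}{n_{y_{\min}}-1}$, using that $x\mapsto\frac{x}{x-1}$ is decreasing and $n_{y^{(t)}}\ge n_{y_{\min}}$. These immediately dispatch the two ``easy'' directions, namely the upper bound for a non-target class and the lower bound for the target class: in each the ratio is governed by $\min_{y'} r(y')$, a product of at most $m$ per-feature factors with the class-prior factor cancelling one of them, which gives $(\frac{n_{y_{\min}}}{n_{y_{\min}}-1})^{m-1}$, already below the claimed $\delta$. (This coordinate-wise step is the multiplied analogue of the single-table reasoning behind Proposition~\ref{prop:bi}.)

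The hard part will be the remaining two directions: the increase of the target class and, dually, the decrease of a non-target class. Here $r(y^{(t)})$ can be as large as $\prod_j\frac{a_j+1}{a_j}$ times smoothing factors, which $\delta$ cannot bound, so it must be charged against the target-query term $R = p_{\A(T)}(y^{(t)}\mid\mathbf{x}^{(t)})/p_{\A(T')}(y^{(t)}\mid\mathbf{x}^{(t)})$ appearing in $\gamma_t$. The favourable half is clean: switching any feature to its value in $t$ multiplies $r(y^{(t)})$ by $\frac{a_j+1}{a_j}>1$, so $r(y^{(t)})(\mathbf{x})$ is maximised exactly at $\mathbf{x} = \mathbf{x}^{(t)}$, tying its largest value to $R$. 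The genuine obstacle is the normaliser: writing the target-class ratio as $\frac{\kappa}{1+(\kappa-1)u}$ with odds ratio $\kappa = \frac{n}{n-1}r(y^{(t)})(\mathbf{x})$ and $u = p_{\A(T')}(y^{(t)}\mid\mathbf{x})$, a query far from $\mathbf{x}^{(t)}$ can drive $u$ small, inflating the ratio even after $\kappa$ has dropped. To close the argument I would have to couple the two effects, showing that unmatching a feature cuts $\kappa$ by $\frac{a_j}{a_j+1}$ while it cannot cut $u$ faster than a compensating amount, so that the supremum over all $\mathbf{x}$ stays within $\max(\delta, R, 1/R)$. Making this trade-off quantitative tightly enough to land on the stated $\delta$, rather than a looser constant, is the step I expect to be most delicate.
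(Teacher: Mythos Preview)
The obstacle you flag as the ``hard part'' is an artifact of your decision to work with \emph{normalised} posteriors. The paper's proof does not normalise: it takes $p_{\A(T)}(y\mid\mathbf{x})$ to be the raw Naive Bayes score $p(y)\prod_{j}p(x_j\mid y)$, so that
\[
\frac{p_{\A(T)}(y\mid\mathbf{x})}{p_{\A(T')}(y\mid\mathbf{x})}
=\frac{p_T(y)}{p_{T'}(y)}\prod_{j=1}^{m}\frac{p_T(x_j\mid y)}{p_{T'}(x_j\mid y)}
= r(y),
\]
with no $\bar r$ term whatsoever. Under this reading both of your ``hard'' directions collapse. For $y\neq y^{(t)}$ the ratio is identically $\tfrac{n-1}{n}$, which already lies in $[\delta^{-1},1]\subseteq[\gamma_t^{-1},\gamma_t]$. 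For $y=y^{(t)}$ each per-feature factor equals $\tfrac{n_{y^{(t)}}-1}{n_{y^{(t)}}}$ when $x_j\neq x_j^{(t)}$ and equals its target-query value $\rho_j(\mathbf{x}^{(t)})\ge 1$ when $x_j=x_j^{(t)}$; multiplying out with the prior ratio $\tfrac{n_{y^{(t)}}(n-1)}{n(n_{y^{(t)}}-1)}$ yields the lower bound $\bigl(\tfrac{n_{y^{(t)}}-1}{n_{y^{(t)}}}\bigr)^{m-1}\tfrac{n-1}{n}\ge\delta^{-1}$ and the upper bound $r(y^{(t)})(\mathbf{x}^{(t)})$, which \emph{is} the target-query quantity appearing in $\gamma_t$. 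No coupling between an odds ratio $\kappa$ and a posterior mass $u$ is ever needed.

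Two smaller points. First, you build the scores from ``Laplace-smoothed estimates,'' but this proposition is about the \emph{unsmoothed} classifier; the smoothed variant is the next proposition and carries a different $\delta$, so your per-feature ratios with $v_j$ in them do not match the statement. Second, even within your normalised framework, the delicate step you leave open is genuinely open: you correctly observe $r(y^{(t)})(\mathbf{x})\le r(y^{(t)})(\mathbf{x}^{(t)})$, but the normalised target-query ratio $R$ can be much smaller than $\tfrac{n}{n-1}\,r(y^{(t)})(\mathbf{x}^{(t)})$ (take $p_{T'}(y^{(t)}\mid\mathbf{x}^{(t)})$ close to $1$), so bounding the general-query normalised ratio by $\max(\delta,R)$ does not follow from monotonicity of $r(y^{(t)})$ alone. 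The paper's unnormalised convention is exactly what removes this difficulty and lands on the stated $\delta$.
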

\begin{proof}
See Appendix~\ref{appendix:proof}.
\end{proof}

If $T$ is a large training dataset, there would be a large number of training examples with class label $y_{\min}$. Therefore, $\delta$ would be close to 1 for a large dataset $T$, and the maximum ratio between predictions of $\A(T)$ and $\A(T \setminus \{t\})$ is determined by the ratio between their predictions on the query $(\mathbf{x}^{(t)}, y^{(t)})$.

Naive Bayes classification is often used with Laplace smoothing~\cite{schutze2008introduction}. When conditional probabilities are estimated from the training dataset, a small constant is added to both the numerator and denominator to get a "smoothed" version of the prediction. The constant is determined by the number of possible values for each attribute. Suppose each attribute in $\mathbf{x}$ has at most $v$ possible values. When calculating the conditional probability of an attribute given the class label, the numerator is increased by 1, and the denominator is increased by $v$. Therefore, naive Bayes classification algorithm with Laplace smoothing is $\delta$-training stable with a slightly different $\delta$ compared to the original naive Bayes classification.

\begin{proposition}
Let $T$ be a training dataset with $m$ features and $n$ examples. Suppose each element in the feature vector has at most $v$ possible values. Let $y_{\min}$ be the least supported class label in $T$.  Let $n_{y_{\min}}$ be the number of examples with class label $y_{\min}$.  Naive Bayes with Laplace smoothing is $\delta$-training stable on $T$ for 
\begin{displaymath}
\delta = \left(\frac{n_{y_{\min}} + v}{n_{y_{\min}}}\right)^{m-1} \frac{n}{n - 1}.
\end{displaymath}
\end{proposition}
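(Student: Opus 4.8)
The plan is to adapt the argument for the unsmoothed case (Proposition~\ref{prop:nb}), tracking how Laplace smoothing perturbs each factor. First I would write the unnormalized Naive Bayes score for a training set $S$ as $s_S(y\mid\mathbf{x}) = \frac{n_y^S}{|S|}\prod_{j=1}^m \frac{\mathrm{count}_S(x_j,y)+1}{n_y^S+v}$, where $n_y^S$ is the number of class-$y$ examples and $\mathrm{count}_S(x_j,y)$ the number of them whose $j$-th feature equals $x_j$; the reported probability is $p_{\A(S)}(y\mid\mathbf{x}) = s_S(y\mid\mathbf{x})/\sum_{y'}s_S(y'\mid\mathbf{x})$. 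Taking $S=T$ and $S=T\setminus\{t\}$, I would factor the target ratio into a per-class \emph{score ratio} $\rho_y(\mathbf{x}) = s_T(y\mid\mathbf{x})/s_{T\setminus\{t\}}(y\mid\mathbf{x})$ times a common \emph{renormalization factor} $B(\mathbf{x}) = \sum_{y'}s_{T\setminus\{t\}}(y'\mid\mathbf{x})\big/\sum_{y'}s_T(y'\mid\mathbf{x})$. Since $B(\mathbf{x})$ is a convex combination of the values $\rho_{y'}(\mathbf{x})^{-1}$ weighted by the $s_T(y'\mid\mathbf{x})$, it lies in $\bigl[\,1/\max_{y'}\rho_{y'}(\mathbf{x}),\ 1/\min_{y'}\rho_{y'}(\mathbf{x})\,\bigr]$, so the normalized ratio $\rho_y B$ and its reciprocal are both squeezed between $\min_{y'}\rho_{y'}/\max_{y'}\rho_{y'}$ and $\max_{y'}\rho_{y'}/\min_{y'}\rho_{y'}$. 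This reduces the whole problem to controlling the $\rho_{y'}$.

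Next I would compute $\rho_y$ explicitly. For $y\neq y^{(t)}$, removing $t$ changes neither $n_y$ nor any count, so only the prior denominator moves and $\rho_y = \frac{n-1}{n}$, independent of $\mathbf{x}$. For $y = y^{(t)}$ the prior contributes $\frac{n_{y^{(t)}}(n-1)}{n(n_{y^{(t)}}-1)}$, every feature contributes a smoothing factor $\frac{n_{y^{(t)}}-1+v}{n_{y^{(t)}}+v}$, and each feature that agrees with $t$ (i.e. $x_j = x_j^{(t)}$) contributes an extra $\frac{c_j+1}{c_j}$ with $c_j = \mathrm{count}_T(x_j^{(t)},y^{(t)})$. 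Every agreement factor exceeds $1$, so $\rho_{y^{(t)}}(\mathbf{x})$ is maximized exactly at the direct query $\mathbf{x}=\mathbf{x}^{(t)}$, which is the quantity already bounded by $\PDTP_{\A,T}(t)$ and hence by $\gamma_t$; any other query only drops agreement factors.

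The core of the argument is to bound the residual — the gap between a general query/label and the direct query that $\gamma_t$ already accounts for — by the constant $\delta$, and here the smoothing factors are the crucial new ingredient. For the reciprocal direction each non-agreeing feature contributes at most $\frac{n_y+v}{n_y-1+v}$, which I would relax to $\frac{n_y+v}{n_y}$ using $v\ge 1$, and then bound $n_y$ from below by $n_{y_{\min}}$, since the least-supported label maximizes this factor. Collecting these over the features, together with the $\frac{n}{n-1}$ from the prior renormalization, yields the claimed $\delta = \bigl(\frac{n_{y_{\min}}+v}{n_{y_{\min}}}\bigr)^{m-1}\frac{n}{n-1}$; the exponent $m-1$ rather than $m$ reflects that one feature's discrepancy is already subsumed into the $\gamma_t$ term. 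Finally I would assemble the two cases ($y=y^{(t)}$ and $y\neq y^{(t)}$) and both inequality directions to conclude $p_{\A(T)}(y\mid\mathbf{x})\le\gamma_t\,p_{\A(T\setminus\{t\})}(y\mid\mathbf{x})$ and its lower counterpart for every $\mathbf{x}$ and $y$.

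I expect the main obstacle to be pinning down the constant \emph{exactly}. The renormalization factor $B(\mathbf{x})$ couples all classes, so the naive bound $\frac{\max\rho}{\min\rho}$ must be shown never to exceed $\max(\delta,\text{direct-query ratio})$ uniformly in $\mathbf{x}$; one must verify that the worst case genuinely occurs at the least-supported class $y_{\min}$ and produces the precise base $\frac{n_{y_{\min}}+v}{n_{y_{\min}}}$ and exponent $m-1$ rather than a looser estimate. Confirming that a single $\gamma_t$ simultaneously governs both the upper and lower inequalities is the other delicate point.
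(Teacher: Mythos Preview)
The paper does not actually prove this proposition; it is stated as an immediate variant of Proposition~\ref{prop:nb}, whose proof in Appendix~\ref{appendix:proof} is the intended template. Your plan broadly matches that template, with one structural difference: you introduce a renormalization factor $B(\mathbf{x})$ to pass from unnormalized scores to posterior probabilities, whereas the paper's proof of Proposition~\ref{prop:nb} (and the LSQ formulation in Equation~(\ref{eq:lsq})) treats $p_{\A(T)}(y\mid\mathbf{x})$ directly as the unnormalized product $\p(y)\prod_j \p(x_j\mid y)$, so Equation~(\ref{eq:nb}) holds identically and no $B(\mathbf{x})$ ever enters. Your normalization-aware route is arguably more careful but also more delicate than you acknowledge: $\gamma_t$ is itself defined via the \emph{normalized} direct-query ratio, so comparing $\max_{y'}\rho_{y'}/\min_{y'}\rho_{y'}$ to $\gamma_t$ would require tracking the $B(\mathbf{x}^{(t)})$ factor hidden inside $\gamma_t$ as well, which your sketch does not do. Under the paper's unnormalized convention this complication simply disappears.

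Your account of the exponent $m-1$ is the one concrete error. It does not arise because ``one feature's discrepancy is already subsumed into the $\gamma_t$ term''; the lower bound is established against $\delta^{-1}$ alone, with $\gamma_t$ playing no role there. In the paper's argument the prior ratio for $y=y^{(t)}$ contributes $\frac{n_{y^{(t)}}}{n_{y^{(t)}}-1}\cdot\frac{n-1}{n}$, and the factor $\frac{n_{y^{(t)}}}{n_{y^{(t)}}-1}$ absorbs one of the $m$ per-feature factors --- in the smoothed case via $\frac{n_{y^{(t)}}+v}{n_{y^{(t)}}-1+v}\cdot\frac{n_{y^{(t)}}-1}{n_{y^{(t)}}}\le 1$ --- leaving $m-1$ factors, each then relaxed by your (correct) inequality $\frac{n_y+v}{n_y-1+v}\le\frac{n_y+v}{n_y}$. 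If you carry out the computation with the wrong mechanism in mind you are liable to end up with exponent $m$ or a stray constant; the rest of your per-feature analysis is sound.
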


\paragraphbe{Linear Statistical Queries Classifiers.} 
Linear statistical queries (LSQ) classifiers are proposed as a generalization framework for naive Bayes, Bayesian network, and Markov models~\cite{roth1999learning}. 

Let $\chi : X^{m} \rightarrow \{0,1\}$ be a feature function that maps a feature vector into a binary value. This representation is useful for features depending on more than one element in $\mathbf{x}$ (for example, $\chi(\mathbf{x}) = 1$ iff $x_1 =1$ and $x_2 = 1$). A \emph{statistical query} $\hat{P}_{[\chi, y]}^T$ gives the probability of all the examples with feature $\chi(\mathbf{x}) = 1$ and class label $y$ in the training dataset. 

A linear statistical queries (LSQ) classifier is a linear discriminator over the feature space, with coefficients calculated by statistical queries. For the convenience of discussion, we review the following definition of LSQ classifier for binary classification:
\begin{definition}[Linear Statistical Queries classifier \cite{roth1999learning}]
Let $\mathcal{X}$ be a class of features. Let $f_{[\chi, y]}$ be a function that depends only on the values $\hat{P}_{[\chi,y]}^T$ for $\chi \in \mathcal{X}$. A \emph{linear statistical queries (LSQ) hypothesis} predicts $y \in \{0,1\}$ given $\mathbf{x} \in X^{m}$  when 
\begin{displaymath}
y = \underset{{y \in \{0,1\}}}{\arg\max} \sum_{\chi \in \mathcal{X}} f_{[\chi,y]}(\hat{P}_{[\chi,y]}^T)\chi(\mathbf{x}).
\end{displaymath}
\end{definition}

We define a family of \emph{log coefficient functions $\mathcal{F}_{\log}$} that contains all the functions $f_{[\chi,y]}$ that calculate the log of a probability or conditional probability in the training dataset. For example, suppose $\A$ is a naive Bayes classification algorithm with $m$ feature attributes. $\A$ can be written as an LSQ classification algorithm with $m$ features: $\chi_0 \equiv 1$, and $\chi_j = x_j$ for $1 \le j \le m$, where
\begin{displaymath}
f_{[\chi_0,y]}(\hat{P}_{[\chi_0,y]}^T) = \log \hat{P}_{[1,y]}^T,
\end{displaymath}
and for $1 \le j \le m$,
\begin{displaymath}
f_{[\chi_j,y]}(\hat{P}_{[\chi_j,y]}^T) = \log\rfrac{\hat{P}_{[x_j,y]}^T}{\hat{P}_{[1,y]}^T}.
\end{displaymath}
$f_{[\chi_0,y]}$ is a log function of the prior probability of $y$, and $f_{[\chi_j,y]}$ is a log function of the conditional probability of $x_j$ given $y$. Therefore, the coefficient functions for naive Bayes belong to the family of log coefficient functions. Similarly, log coefficient functions are also used in Bayes network and Markov model. 

When $f_{[\chi,y]} \in \mathcal{F}_{\log}$, the sum of $f_{[\chi,y]}$ is equivalent to the product of the corresponding probabilities. Therefore, in addition to returning the most likely label, an LSQ classifier $\A(T)$ is also a probabilistic classifier that returns the following predicted probability:
\begin{equation}
\label{eq:lsq}
\begin{aligned}
p_{\A(T)}(y \mid \mathbf{x}) &= e^{\underset{\chi \in \mathcal{X}}{\sum} f_{[\chi,y]}(\{\hat{P}_{[\chi,y]}^T\})\chi(\mathbf{x})} \\
&=\prod_{\chi \in \mathcal{X}} e^{f_{[\chi,y]}(\{\hat{P}_{[\chi,y]}^T\})\chi(\mathbf{x})}.
\end{aligned}
\end{equation}

Each term $e^{f_{[\chi,l]}(\{\hat{P}_{[\chi,l]}^T\})\chi(\mathbf{x})}$ in Equation ($\ref{eq:lsq}$) is equivalent to calculating a probability or a conditional probability in the training dataset $T$ using Bayes inference, therefore is $\frac{4}{3}$-training stable according to proposition~\ref{prop:bi}. Consequently, we have the following proposition for LSQ probabilistic classification algorithms:
\begin{proposition}
\label{prop:lsq}
If $\A$ is an LSQ probabilistic classification algorithm with $f_{[\chi,y]} \in \mathcal{F}_{\log}$ for all $\chi \in \mathcal{X}, y \in Y$, $\A$ is $\delta$-training stable on any training dataset with $\delta = (\frac{4}{3})^{|\mathcal{X}|}$. 
\end{proposition}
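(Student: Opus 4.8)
The plan is to leverage the product structure already exposed in Equation~(\ref{eq:lsq}). Writing $p_{\A(T)}(y \mid \mathbf{x}) = \prod_{\chi \in \mathcal{X}} g_\chi^{T}(y,\mathbf{x})$ with $g_\chi^{T}(y,\mathbf{x}) = e^{f_{[\chi,y]}(\hat{P}_{[\chi,y]}^{T})\,\chi(\mathbf{x})}$, the full per-point ratio factors as $R(\mathbf{x},y) = p_{\A(T)}(y\mid\mathbf{x})/p_{\A(T\setminus\{t\})}(y\mid\mathbf{x}) = \prod_{\chi} R_\chi(\mathbf{x},y)$ where $R_\chi = g_\chi^{T}/g_\chi^{T\setminus\{t\}}$. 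Since each $f_{[\chi,y]}\in\mathcal{F}_{\log}$, every factor $g_\chi$ is a (conditional) probability computed by Bayes inference, so by Proposition~\ref{prop:bi} each $R_\chi$ obeys a $\tfrac43$-training-stability bound: $\gamma_{t,\chi}^{-1}\le R_\chi(\mathbf{x},y)\le\gamma_{t,\chi}$ with $\gamma_{t,\chi}=\max(\tfrac43,r_\chi,r_\chi^{-1})$ and $r_\chi := R_\chi(\mathbf{x}^{(t)},y^{(t)})$. The target is to upgrade these $|\mathcal{X}|$ local bounds into $\gamma_t^{-1}\le R(\mathbf{x},y)\le\gamma_t$ for the full classifier, where $\gamma_t=\max((\tfrac43)^{|\mathcal{X}|},r,r^{-1})$ and $r=R(\mathbf{x}^{(t)},y^{(t)})=\prod_\chi r_\chi$.

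The crucial preliminary step I would carry out is a directional lemma describing how each factor moves when $t=(\mathbf{x}^{(t)},y^{(t)})$ is deleted, read off directly from the count changes rather than from the black-box bound. Because $t$ carries label $y^{(t)}$, deleting it leaves every conditional count tied to a label $y\neq y^{(t)}$ literally unchanged; thus for such $y$ every conditional factor has $R_\chi=1$ and only the prior-type factor moves, staying in $[\tfrac34,1]$. For $y=y^{(t)}$ I would split the active factors ($\chi(\mathbf{x})=1$) into those aligned with the target ($\chi(\mathbf{x}^{(t)})=1$) and the rest: an aligned factor undergoes exactly the same count change at $(\mathbf{x},y^{(t)})$ as at the target, so $R_\chi(\mathbf{x},y^{(t)})=r_\chi\ge 1$ (adding a record supporting $y^{(t)}$ only raises these probabilities), while an unaligned active factor loses mass only from its denominator, giving $R_\chi\in[\tfrac34,1]$.

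With the directional lemma the assembly is clean and splits on whether $y=y^{(t)}$. For $y\neq y^{(t)}$ the product reduces to the single prior-type factor, lying in $[\tfrac34,1]\subseteq[\gamma_t^{-1},\gamma_t]$. For $y=y^{(t)}$, the upper bound follows by discarding the unaligned active factors (each $\le 1$) and padding the aligned product with the remaining inactive aligned factors (each $r_\chi\ge 1$), yielding $R(\mathbf{x},y^{(t)})\le\prod_{\chi}r_\chi=r\le\gamma_t$; the lower bound follows because the aligned factors contribute $\ge 1$ and at most $|\mathcal{X}|$ unaligned factors each contribute $\ge\tfrac34$, so $R(\mathbf{x},y^{(t)})\ge(\tfrac34)^{|\mathcal{X}|}=((\tfrac43)^{|\mathcal{X}|})^{-1}\ge\gamma_t^{-1}$. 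Since both inequalities hold for every $(\mathbf{x},y)$, $\A$ is $(\tfrac43)^{|\mathcal{X}|}$-training stable.

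The hard part will be precisely this assembly, because the tempting shortcut of multiplying the $|\mathcal{X}|$ per-factor bounds from Proposition~\ref{prop:bi} is not sufficient: $\prod_\chi\gamma_{t,\chi}$ can strictly exceed $\gamma_t$, for instance when one factor sits at a large target ratio $r_\chi\gg1$ while another is allowed to reach $\tfrac43$, already overshooting $\max((\tfrac43)^{|\mathcal{X}|},r)$. What rescues the exponential bound is exactly the alignment observation: a factor can exceed $\tfrac43$ only when the query points in that factor's target direction, and in that regime the remaining factors are pinned to values $\le 1$ rather than being free to reach $\tfrac43$. Establishing that off-target factors can only shrink is therefore the real content of the proof; once it is in place, the product telescopes into the single target ratio $r$ and the bound $\delta=(\tfrac43)^{|\mathcal{X}|}$ drops out.
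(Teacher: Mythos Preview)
The paper's own justification is the one-sentence sketch immediately preceding the proposition: each factor in Equation~(\ref{eq:lsq}) is a Bayes-inference probability, hence $\tfrac43$-training stable by Proposition~\ref{prop:bi}, and therefore the product of $|\mathcal{X}|$ such factors is $(\tfrac43)^{|\mathcal{X}|}$-training stable. You are right that this naive multiplication does not literally establish the training-stability inequality as stated, since $\prod_\chi\gamma_{t,\chi}$ can strictly exceed $\gamma_t=\max\bigl((\tfrac43)^{|\mathcal{X}|},r,r^{-1}\bigr)$; your directional analysis, which mirrors the structure of the paper's own proof of Proposition~\ref{prop:nb}, is a genuine refinement of what the paper writes down.

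That said, your directional lemma has a gap at precisely the step you identify as the real content. You assert that an unaligned active factor ($\chi(\mathbf{x})=1$ but $\chi(\mathbf{x}^{(t)})=0$) satisfies $R_\chi\in[\tfrac34,1]$ because ``only the denominator moves.'' The upper bound $R_\chi\le 1$ is fine, but the lower bound $\tfrac34$ does not follow from the count change: for a factor of the form $\hat P(\chi=1\mid y^{(t)})$, deleting $t$ gives ratio $(n_{y^{(t)}}-1)/n_{y^{(t)}}$, which drops below $\tfrac34$ whenever $n_{y^{(t)}}<4$. Nor can you recover $\tfrac34$ by invoking Proposition~\ref{prop:bi} as a black box, because the Bayes-inference ``target'' for that factor lives at $\chi=0$, not $\chi=1$, and its own target ratio can be large, so the per-factor $\gamma_{t,\chi}$ is not $\tfrac43$. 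Consequently the aggregated lower bound $R(\mathbf{x},y^{(t)})\ge(\tfrac34)^{|\mathcal{X}|}$ is not established, and the assembly does not yet deliver $\delta=(\tfrac43)^{|\mathcal{X}|}$. To close this you would either need a size hypothesis on the relevant counts (as the paper explicitly does in Proposition~\ref{prop:nb}, where the resulting $\delta$ depends on $n_{y_{\min}}$), or a different argument for the unaligned-factor contribution that does not pass through a per-factor $\tfrac34$ bound.
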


For naive Bayes classification algorithm, since each attribute is an independent feature, with $M-1$ feature attributes, $|\mathcal{X}|$ equals $M$ . Compared to proposition~\ref{prop:nb}, proposition~\ref{prop:lsq} provides a looser but more generalized bound on $\delta$-training stability for naive Bayes classification algorithm. This bound does not depend on the records in the training dataset $T$. For Bayesian network and Markov models, $|\mathcal{X}|$ equals to the layer of dependencies in the network. $|\mathcal{X}|$ gets larger when the network structure gets more complicated.

\begin{table*}
\centering
{ \small
\caption{\small Training Stability of Different Classifiers. \vspace{-3pt}}
\label{table:training_stability}
\begin{tabular}{|c|c|c|}
\hline
\textbf{Training Stable Classifiers} & \textbf{Training Stability Unknown} & \textbf{Non-Training Stable Classifiers} \\
\hhline{|=|=|=|}
Bayes Inference Classifiers, Naive Bayes Classifiers, & Support Vector Machines & k-Nearest Neighbors \\
Linear Statistical Queries  & Neural Networks &  \\
(e.g., Bayes Networks, Markov Models)  & Logistic Regressions  & \\
Random Decision Trees & ... &  \\
\hline
\end{tabular}
}
\vspace{-6pt}
\end{table*}

\paragraphbe{Decision Trees.}
Some decision trees, such as ID3 and C4.5, construct the structure of the tree by calculating information gain of each potential partition of attributes\cite{quinlan1986induction}. This approach makes achieving $\delta$-training stability difficult because when an example is removed from the training dataset, it is hard to predict its influence on the structure of the tree. For example, removing one example may change the splitting point with the highest information gain, so that the structure of $\A(T)$ and $\A(T \setminus \{t\})$ are completely different. 

However, when the structure of a decision tree is independent of its training dataset, its prediction is equivalent to the conditional probability of $y$ given a subset of attributes determined by the leaves of the tree. Therefore, a single decision tree with structure independent of its training dataset is $\frac{4}{3}$-training stable. 

A random decision tree classifier is a classifier constructed by aggregating $K$ randomly generated decision trees with structures independent of the training dataset\cite{fan2003random}. Random decision trees have better privacy properties because the structure of the trees do not leak private information about the training set. Previous work has shown that a large amount of noise is needed to make ID3 differentially private while it is more practical to achieve differential privacy for a random decision tree~\cite{jagannathan2009practical}.

If the predictions of the random decision tree classifier is aggregated in a way that preserves the training stability, the random decision tree classifier is also training stable.
\begin{proposition}
\label{prop:dt}
Let $\A_K$ be a random decision tree classification algorithm with $K$ randomly generated decision trees. Given a query $(\mathbf{x},y)$, let $\p_1(y \mid \mathbf{x}), \p_2(y \mid \mathbf{x}), \dots, \p_K(y \mid \mathbf{x})$ be the predictions given by each random decision tree. $\A_K$ is $\left( \frac{4}{3}\right)$-training stable, if it computes the prediction as follows:
\begin{displaymath}
\p_{\A(T)}(y \mid \mathbf{x}) = e^{\frac{1}{K}\sum_{j = 1}^{K}\log(\p_j(y \mid \mathbf{x}))}.
\end{displaymath}
\end{proposition}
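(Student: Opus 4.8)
The plan is to exploit the fact that the prescribed aggregation is a \emph{geometric} mean, so that the ratio between the two aggregated predictions factors across the $K$ trees. First I would observe that each random decision tree has a structure independent of the training data; as noted just above, such a tree simply outputs the conditional probability of $y$ given the subset of attributes selected by the leaf into which $\mathbf{x}$ falls, so by Proposition~\ref{prop:bi} each individual tree $\p_j$ is $\tfrac{4}{3}$-training stable. Writing $T' = T\setminus\{t\}$ and using $\p_{\A(T)}(y\mid\mathbf{x}) = \exp\big(\tfrac1K\sum_{j}\log \p_j(y\mid\mathbf{x})\big)$, the quantity controlling stability becomes
\begin{displaymath}
\frac{\p_{\A(T)}(y\mid\mathbf{x})}{\p_{\A(T')}(y\mid\mathbf{x})} = \prod_{j=1}^{K}\left(\frac{\p_j^{T}(y\mid\mathbf{x})}{\p_j^{T'}(y\mid\mathbf{x})}\right)^{1/K},
\end{displaymath}
the geometric mean of the $K$ per-tree ratios. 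The goal is therefore to show that this geometric mean lies in $[\gamma_t^{-1},\gamma_t]$, where $\gamma_t$ is determined by the aggregated prediction at the target query $(\mathbf{x}^{(t)},y^{(t)})$.

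Next I would record two structural facts that make the per-tree ratios easy to control. Because the prediction of tree $j$ depends only on which leaf $\mathbf{x}$ reaches, and removing $t$ can perturb only the counts in the single leaf $\ell_j$ containing $\mathbf{x}^{(t)}$, each per-tree ratio is \emph{either} exactly $1$ (when $\mathbf{x}$ misses $\ell_j$) \emph{or} equal to the ratio evaluated at the target feature $\mathbf{x}^{(t)}$ (when $\mathbf{x}$ reaches $\ell_j$). Moreover these ratios are \emph{sign-consistent} across trees: deleting $t$ lowers the probability of the target label $y^{(t)}$ in every affected leaf (per-tree ratio $\ge 1$) and raises the probability of every other label (per-tree ratio $\le 1$). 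This consistency is exactly what prevents the cancellation that would otherwise let a geometric mean near $1$ hide large individual swings.

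Combining these, the worst cases of the aggregated ratio over all queries are attained at $\mathbf{x}=\mathbf{x}^{(t)}$, where every tree contributes. For the target label the aggregated ratio lies between $1$ and $R_{\mathrm{agg}} := \prod_j (\p_j^{T}(y^{(t)}\mid\mathbf{x}^{(t)})/\p_j^{T'}(y^{(t)}\mid\mathbf{x}^{(t)}))^{1/K}$, which is precisely the aggregate target ratio and is therefore sandwiched in $[\gamma_t^{-1},\gamma_t]$ by the definition of $\gamma_t$. For every other label the aggregated ratio is at most $1 \le \gamma_t$, so only its lower bound remains in question.

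I expect that lower bound to be the main obstacle: I must show that the aggregated ratio for a non-target label at $\mathbf{x}^{(t)}$ stays above $\gamma_t^{-1} = 1/\max(\tfrac43, R_{\mathrm{agg}})$. Sign-consistency reduces this to the single-tree guarantee already furnished by Proposition~\ref{prop:bi}: since no tree's contribution points the wrong way, bounding each per-tree non-target ratio below by the corresponding $\gamma_{t,j}^{-1}$ and taking the geometric mean yields a bound of the required form, with the floor $\delta=\tfrac43$ absorbing the trees whose target ratio is mild and $R_{\mathrm{agg}}$ absorbing the rest. Finally I would note that the binning step keeps every prediction at least $0.005>0$, so each $\log\p_j$ is finite and the geometric mean is well defined, discharging the positivity hypotheses in the definition of training stability.
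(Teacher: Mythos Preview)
Your proposal follows the same route as the paper: write the aggregate ratio as the geometric mean of the $K$ per-tree ratios and invoke the $\tfrac{4}{3}$-training stability of each individual tree from Proposition~\ref{prop:bi}. The paper's proof is far terser---it records only the per-tree upper bound $\frac{\p_k(y\mid\mathbf{x})}{\p'_k(y\mid\mathbf{x})}\le\max\bigl(\tfrac{4}{3},\tfrac{\p_k(y^{(t)}\mid\mathbf{x}^{(t)})}{\p'_k(y^{(t)}\mid\mathbf{x}^{(t)})}\bigr)$ and then the aggregate upper bound, without treating the lower-bound direction at all---so your leaf-locality and sign-consistency observations, and your identification of the non-target lower bound as the crux, go beyond the paper's presentation while using the same underlying idea.
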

\begin{proof}
See appendix \ref{appendix:proof}.
\end{proof}

\paragraphbe{$k$-Nearest Neighbors.}
$k$-nearest neighbors ($k$-NN) classification\cite{altman1992introduction} is an instance-based learning algorithm. Instead of constructing a model from the training dataset, all examples in the training dataset are saved and all computations are deferred until classification. When responding to a query $(\mathbf{x},y)$, predictions are made by approximating locally from a few examples close to  the query. Unlike the aforementioned classifiers, $k$-NN is not training stable for any $\delta$.

For simplification, suppose $\A$ is a 1-nearest neighbor classification algorithm. Let $(\mathbf{x}^{(t)}, y^{(t)})$, $(\mathbf{x}_1,y_1)$ and  $(\mathbf{x}_2, y_2)$  be three examples in a training dataset. $(\mathbf{x}_1,y_1)$ is the nearest neighbor of $(\mathbf{x}^{(t)}, y^{(t)})$ when $t$ itself is not in the training dataset. Let $(\mathbf{x}', y')$ be a point whose nearest neighbor in the training dataset is $(\mathbf{x}^{(t)}, y^{(t)})$ and second nearest neighbor is $(\mathbf{x}_2, y_2)$. Suppose $y^{(t)} = y_1=y' \neq y_2$. When $t \in T$, the classifier $\A(T)$ will predict the class label as $y^{(t)}$ for both features $\mathbf{x}^{(t)}$ and $\mathbf{x}'$. When $t$ is removed from the training dataset, the classifier $\A(T \setminus \{t\})$ will still predict the class label as $y^{(t)}$ for feature $\mathbf{x}^{(t)}$ because of point $(\mathbf{x}_1,y_1)$. However, $\A(T \setminus \{t\})$ will predict the class label for $\mathbf{x}'$ as $y_2$ since it is closest to $(\mathbf{x}_2, y_2)$ when $(\mathbf{x}^{(t)}, y^{(t)})$ is removed. Consequently, when $t$ is removed from the training dataset, the prediction for $t$ remains unchanged, but the prediction for a neighboring point $(\mathbf{x}',y')$ is greatly influenced. If we calculate the probability given by the classifier, we have 
\begin{displaymath}
p_{\A(T)}(y' \mid \mathbf{x}') = 1\quad \textrm{and} \quad p_{\A(T \setminus \{t\})}(y' \mid \mathbf{x}') = 0,
\end{displaymath}
while
\begin{displaymath}
\frac{p_{\A(T)}(y^{(t)} \mid \mathbf{x}^{(t)})}{p_{\A(T \setminus \{t\})}(y^{(t)} \mid \mathbf{x}^{(t)})} = 1.
\end{displaymath}
As $k$ increases, the probability of the aforementioned case drops. However, there is always a possibility that, when an exampled $t$ is removed from the training dataset $T$, for some queries $(\mathbf{x},y)$, the prediction $p_{\A(T)}(y \mid \mathbf{x})$ drops from $1$ to $0$, while the prediction $p_{\A(T)}(y^{(t)} \mid \mathbf{x}^{(t)})$ remains unchanged. Therefore, $k$-nearest neighbors classification algorithm is not training stable.s

\subsection{An Upper Bound on DTP}
For non-training stable classifiers or classifiers with unknown training stability, the DTP metric cannot be directly calculated based on PDTP measurements. However, it is still possible to estimate an upper bound for DTP based on Lipschitz conditions. 

Given a classification algorithm $\A$, the set of possible classifiers learned by $\A$ can be abstracted as a class of functions $\{C_u, u \in \mathcal{U}\}$, where $u \in \mathcal{U}$ is a $d$-dimensional vector that specifies the trainable parameters in the classifier and $\mathcal{U} \subseteq \mathbb{R}^{d}$.  Without loss of generality, we assume that $C_u$ maps a feature vector $\mathbf{x}$ to a vector of predicted log probabilities of each class labels $y \in Y$. That is, $C_u(\mathbf{x}) = \left(\log p_{\A(T)}(y_1 \mid \mathbf{x}), \log p_{\A(T)}(y_2 \mid \mathbf{x}), \dots, \log p_{\A(T)}(y_k \mid \mathbf{x}) \right)$. 

We assume that for all $\mathbf{x} \in X^{m}$, $C_u(\mathbf{x})$ is $L$-Lipschitz bounded under infinity norm with respect to $u$. That is, $\left| C_u(\mathbf{x}) - C_{u'}(\mathbf{x}) \right|_{\infty} \leq L \left| u - u' \right|_{\infty}$. Based on Lipschitz condition, to calculate $\DTP(\A,T)$, it is enough to measure the change of model parameters when one training example is removed. 

Let $u_T$ be the model parameters learned on dataset $T$ and $u_{T \setminus \{t\}}$ be the parameters learned on $T \setminus \{t\}$. The following theorem gives an upper bound of $\DTP(\A,T)$:
\begin{theorem} 
\label{theorem:lipschitz}
If $\A(T)$ is an $L$-Lipschitz bounded classifier, then $\DTP_{\A,T}(t)$ is upper bounded by $L \cdot \max_{t \in T} \left| u_T - u_{T \setminus \{t\}} \right|_{\infty}$.
\end{theorem}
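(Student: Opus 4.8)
The plan is to rewrite $\DTP_{\A,T}(t)$ so that it becomes literally a maximum over queries $\mathbf{x}$ of an infinity-norm distance between the two log-probability vectors $C_{u_T}(\mathbf{x})$ and $C_{u_{T\setminus\{t\}}}(\mathbf{x})$, and then apply the Lipschitz hypothesis pointwise in $\mathbf{x}$. First I would unwind the DTP metric. By Definition~\ref{def:dtp}, the record $t$ is $\epsilon$-DTP exactly when $e^{-\epsilon} \le p_{\A(T)}(y\mid\mathbf{x})/p_{\A(T\setminus\{t\})}(y\mid\mathbf{x}) \le e^{\epsilon}$ for every $\mathbf{x}\in X^m$ and $y\in Y$, so taking the least such $\epsilon$ gives
\begin{equation*}
\DTP_{\A,T}(t) = \max_{\mathbf{x}\in X^m,\ y\in Y} \left| \log p_{\A(T)}(y\mid\mathbf{x}) - \log p_{\A(T\setminus\{t\})}(y\mid\mathbf{x}) \right|.
\end{equation*}
Since $C_u(\mathbf{x})$ is by construction the vector of log-probabilities $(\log p_{\A(T)}(y_1\mid\mathbf{x}),\dots,\log p_{\A(T)}(y_k\mid\mathbf{x}))$ evaluated at $u=u_T$ (and likewise at $u_{T\setminus\{t\}}$), the inner maximum over $y\in Y$ is exactly the infinity norm of the coordinatewise difference: for each fixed $\mathbf{x}$,
\begin{equation*}
\max_{y\in Y} \left| \log p_{\A(T)}(y\mid\mathbf{x}) - \log p_{\A(T\setminus\{t\})}(y\mid\mathbf{x}) \right| = \left| C_{u_T}(\mathbf{x}) - C_{u_{T\setminus\{t\}}}(\mathbf{x}) \right|_{\infty}.
\end{equation*}

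Next I would invoke the Lipschitz assumption directly. For every $\mathbf{x}\in X^m$ the hypothesis yields $\left| C_{u_T}(\mathbf{x}) - C_{u_{T\setminus\{t\}}}(\mathbf{x}) \right|_{\infty} \le L\, \left| u_T - u_{T\setminus\{t\}} \right|_{\infty}$, and crucially the right-hand side is independent of $\mathbf{x}$. Hence the maximum over the feature space passes through the inequality without change, giving $\DTP_{\A,T}(t) \le L\, \left| u_T - u_{T\setminus\{t\}} \right|_{\infty}$. Finally, bounding the single-record parameter shift by its worst case, $\left| u_T - u_{T\setminus\{t\}} \right|_{\infty} \le \max_{t\in T} \left| u_T - u_{T\setminus\{t\}} \right|_{\infty}$, produces the claimed bound $L\cdot\max_{t\in T}\left| u_T - u_{T\setminus\{t\}} \right|_{\infty}$.

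The argument is essentially a chain of identifications followed by one application of the hypothesis, so there is no deep obstacle; the only point requiring care is the first step, namely reconciling the ratio-and-reciprocal form in which the DTP metric is written with the $e^{\pm\epsilon}$ inequalities of Definition~\ref{def:dtp}, so that $\DTP$ is read on the log scale rather than as a raw ratio and therefore matches the log-probability output convention of $C_u$. Once that correspondence is pinned down, the conversion of the $y$-maximum into an infinity norm and the elimination of the $\mathbf{x}$-dependence on the right-hand side are both immediate, and the theorem follows.
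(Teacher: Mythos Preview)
Your proposal is correct and follows essentially the same approach as the paper's proof: both identify $\DTP_{\A,T}(t)$ with $\max_{\mathbf{x},y}\left|\log p_{\A(T)}(y\mid\mathbf{x})-\log p_{\A(T\setminus\{t\})}(y\mid\mathbf{x})\right|$, recognize this as $\max_{\mathbf{x}}\left|C_{u_T}(\mathbf{x})-C_{u_{T\setminus\{t\}}}(\mathbf{x})\right|_{\infty}$, apply the Lipschitz bound, and then take the worst case over $t$. If anything, your version is slightly more explicit than the paper's in justifying the passage from the ratio formulation of $\DTP$ to the log-scale absolute difference.
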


\begin{proof}
See Appendix~\ref{appendix:proof}.
\end{proof}

\section{Related Work}
\label{sec:related}
Works on re-identification attacks~\cite{sweeney2013matching, loukides2010disclosure, narayanan2008robust} have demonstrated the privacy risks of releasing anonymized datasets. Releasing highly-reidentifiable datasets allows an attacker to infer sensitive attributes of individuals in these datasets. In these studies, the researchers collect background knowledge containing nonsensitive attributes from external sources, and use them to re-identify records with sensitive attributes in the anonymized datasets. In DTP, we make similar assumptions on the background knowledge, but instead of publishing the anonymized dataset, we assume that only the classification model learned on the dataset is published. We demonstrate that publishing a classification model learned from a highly-reidentifiable dataset can also bring high privacy risks. 

To protect against re-identification, syntactic privacy metrics like $k$-anonymity~\cite{sweeney2002k}, $l$-diversity~\cite{machanavajjhala2007diversity}, and $t$-closeness~\cite{li2007t} purports to measure the privacy risk of an anonymized dataset. These metrics reflect properties of a dataset, whereas DTP reflects the property of the combination of a classification algorithm and a dataset. 

Differential privacy (DP)~\cite{dwork2006differential}  guarantees privacy protection against an attacker with precise knowledge about the input dataset and \emph{all} the entities in the universe except for the target individual. Follow-up works on DP try to relax the background knowledge by building more realistic background knowledge models~\cite{machanavajjhala2009data, bassily2013coupled, li2012sampling, lee2012differential, tramer2015differential}. Most of these relaxations can be unified under the framework of membership privacy~\cite{li2013membership}, which shows that protecting private information is equivalent to preventing an attacker from knowing whether an individual is included in the input dataset. Specifically, DP is shown to be equivalent to membership privacy with mutually independent distributions. Other extensions on DP enhances protections on outliers in a data set while relaxes protections on the remaining examples~\cite{lui2015outlier, gehrke2012crowd}. Similar to DP, these privacy definitions cannot be directly calculated given a data set and an algorithm. DTP is different with them in terms that it can also be used as a privacy metric. 

DP has been widely applied to complete privacy preserving machine learning tasks with different machine learning algorithms such as regression~\cite{zhang2012functional}, SVM~\cite{rubinstein2009learning}, principal component analysis~\cite{chaudhuri2013near}, empirical risk minimization~\cite{chaudhuri2011differentially}, and deep learning~\cite{shokri2015privacy}. These privacy-preserving machine learning algorithms achieve DP by adding noises to either the objective function or the output parameters. A non-randomized machine learning algorithm cannot be $\epsilon$-DP or $\epsilon$-PMP for any $\epsilon$. DTP gives a possible way to measure the privacy risk of these non-randomized machine learning algorithms. 

Attacks on machine learning models have shown the possibility of inferring sensitive information about a model's training set with black-box access to the model. For example, model inversion attacks~\cite{fredrikson2014privacy} and naive attacks~\cite{cormode2011personal} infer sensitive attributes of a record based on the model's predictions on that record; membership inference attacks~\cite{shokri2016membership} infer whether a record is included in the training set of the model. Moreover, if the adversary colludes with the party that provides the training algorithm, he can embed sensitive information, including membership information, in the predictions of the models~\cite{song2017machine}. All these attacks can be summarized under the framework in~\cite{yeom2017unintended}. DTP extends this line of work by studying the \emph{unintentional} leakage of \emph{membership} information. We emprically estimate the risk of membership inference attacks under the assumption that the adversary only has black-box access to the model and does \emph{not} control the training algorithm. 

The connections between DP and overfitting is first pointed by the paper on the reusable hold out method~\cite{dwork2006differential}. The paper applies DP on the validation dataset to make it possible for the validation dataset to be reused, without the risk of overfitting. This demonstrates the possibility of using privacy preserving techniques to prevent overfitting. DTP makes this connection from the opposite direction: using cross validation techniques in overfitting prevention to measure the privacy risks. 
\section{Reducing DTP}

According to the DTP-$1$ hypothesis, it is unsafe to release a classifier if any record in its training set has DTP greater than 1. However, what should the data owner do if only a small number of records in the training dataset violate this hypothesis? It is unsafe to release the classifier since it contains records vulnerable to membership inference attacks using techniques like those in Section~\ref{sec:case_studies}. But it is natural to ask: {\em Can removing high-risk records from the training set mitigate the membership privacy risk?} The interesting answer is sometimes yes and sometimes no!

Let us consider a specific example of how removing high-risk records can influence DTP. The examples in Section~\ref{sec:case_studies} are not ideal because the classifiers trained on the adult dataset already have low privacy risks, while the classifiers trained on the purchased dataset are so risky that they are unlikely to be mitigated by simple mechanisms. We therefore consider a fresh example that fails to satisfy DTP-1, but not by much. To do this we train a naive Bayes classifier on the 2013 American Community Survey (ACS) dataset. This dataset has similar attributes as the adult dataset since the adult dataset was sampled and cleaned from the 1994 Census dataset. We restrict to four attributes: Age (AGEP), Marital Status (MAR), Race (RAC1P), and Gender (SEX). As we did with the adult dataset, we use the salary class ($>50$K or $\leq 50$K) as the class attribute. We use all $1.6$ million records as our training set. The DTP of the full dataset is $3.09$, indicating vulnerability to membership inference attacks.

To reduce the DTP in the dataset, we perform the following simple experiment: First, we measure DTP of each record in the training set and sort all the records in the decreasing order of their DTP measurements. Intuitively, the records are sorted in the decreasing order of their (initial) privacy risks. Next, we remove these high-risk records from the training set one at a time. After each record is removed, we re-calculate the DTP of all records remaining in the training set and estimate the resulting privacy risk as the highest DTP in this reduced training set. Figure~\ref{fig:remove}
\begin{figure}
    \includegraphics[width=0.9\columnwidth]{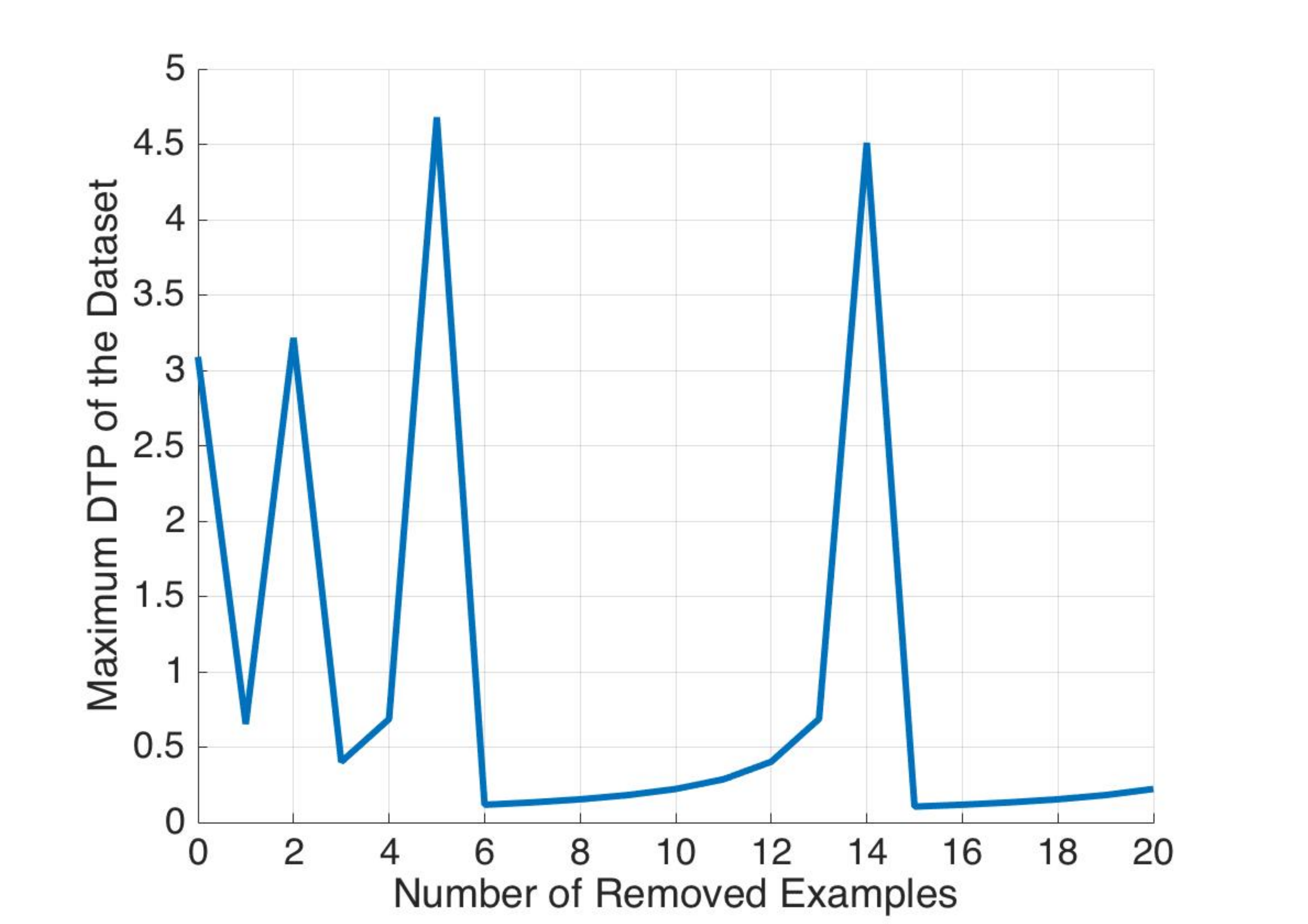}%
    \vspace{-3pt}
    \caption{\small Effects of Removing Hiigh-Risk Records. \vspace{-8pt}}%
    \label{fig:remove}%
\end{figure}%
shows the change of maximum DTP in the training set when these high-risk records are removed. Removing the record with the highest risk reduces the highest DTP in training set from $3.09$ to $0.65$, greatly reducing the classifier's vulnerability to membership attacks and achieving DTP-1. However, one must not get greedy and think that removing the next individual will reduce the risk even further. Doing this takes the DTP back to around $3$. Why? Because, unlike the first individual removed, this second record apparently is needed to decrease another record's influence on the target classifier. Indeed, removing further individuals appears to lead to collections of individuals that rely on each other to keep DTP down. Their successive removal creates the sawtooth pattern seen in Figure~\ref{fig:remove}. Based on this observation, we recommend removing high-risk examples as a way of reducing DTP and mitigating against membership attacks when only a few examples in the training set have high privacy risks. Better understanding of how to reduce DTP is a promising target for future research.

\section{Discussion}

In this section, we discuss a few interesting points related to DTP.

\paragraphbe{Difference between DTP and DP.}
When feasible, using differential privacy during training is a good strategy to mitigate the risk of publishing the model.
However, there are cases when differential privacy cannot be used; either because there is no appropriate training mechanism or because the data owner cannot afford to add noise to their models (e.g., in the medical domain). 
Therefore, we need a strategy to estimate the privacy risk of the model when no privacy protections are added. 
Note that even when differential privacy is used, DTP can still be used to estimate the privacy risks before applying differential privacy. With the DTP measurements, the data owner can understand how much he benefits from using differential private mechanisms. This information helps balance the trade-off between utility and privacy.

Unlike DP, DTP is a privacy metric instead of a privacy protection mechanism. When a machine learning model does not satisfy differential privacy for any $\epsilon$, little is known about its privacy risk. However, the metric $\DTP_{\A,T}(t)$ outputs a value of $\epsilon$ for \emph{any} target record $t$ and any \emph{any} classifier $\A(T)$. 

\paragraphbe{Difference between DTP and Membership Attacks.}
In Section~\ref{sec:case_studies}, we show that DTP measurements correlate with the accuracy of different membership attacks. However, the measurement of DTP \emph{cannot} be replaced by running a series of membership attacks. First, it is computationally inefficient to simulate all possible membership attacks. Moreover, no matter how much computational power a data owner has, there may always exist an adversary with superior computational capability. Second, we cannot rule out the possibility that the adversary knows a stronger a membership attack than the data owner. As demonstrated in section~\ref{sec:case_studies}, a record immune to the distance-based membership attack can be vulnerable to another attack---even a weaker one, overall. Therefore, using a general privacy metric like DTP to estimate membership privacy is preferable.

\paragraphbe{Privacy Risks of Non-Training Stable Classifiers.}
In Section~\ref{sec:measure_dtp}, we prove that naive Bayes, random decision trees, and linear statistical queries satisfy training stability but $k$-NN provably does not. However, we do not know whether classifiers such as neural networks and SVMs are training stable. We leave the task of investigating this question for future work.

Remark that although measuring DTP is computationally infeasible for non-training stable models, this does not mean that DTP metrics are useless for these models. Indeed, as shown in Section~\ref{sec:case_studies} PDTP measurements have high correlations with direct membership attacks. The drawback for non-training stable algorithms is their potential vulnerability to indirect membership attacks. As future work, we plan to study indirect membership attacks and ways to mitigate them.

Although DTP doesn't provide a theoretical privacy guarantee like DP, we find that it is highly correlated with the performance of state-of-the-art membership inference techniques. Unlike DP, which bounds the change in the probability of observing an output when a record is removed, DTP bounds the magnitude of the difference caused by removal of a record. In practice, if the magnitude of this difference is small, it is indistinguishable from the difference caused by other uncertain factors from the adversary's perspective. When attacking machine learning models, these are at least two sources of uncertainty.

First, in models like neural networks, some parameters such as weights are initialized randomly. Different initialization states may cause the model to be converged to different local optimals, so models trained on the same dataset can give slightly different predictions on the same record. If a record's DTP is small enough to be indistinguishable from the difference caused by random initialization, the record has little privacy risk. Figure~\ref{fig:init_var} shows the variation in model prediction caused by random initialization. In the experiment, we train $100$ neural network models on the same training set with $10000$ records uniformly sampled from the UCI Adult dataset. We calculate each model's prediction on two individuals and plot the histogram of the predicted probability that the individual has annual salary greater than 50K. 

\begin{figure}
    \centering
    \begin{subfigure}[b]{0.24\textwidth}
        \includegraphics[width=\textwidth]{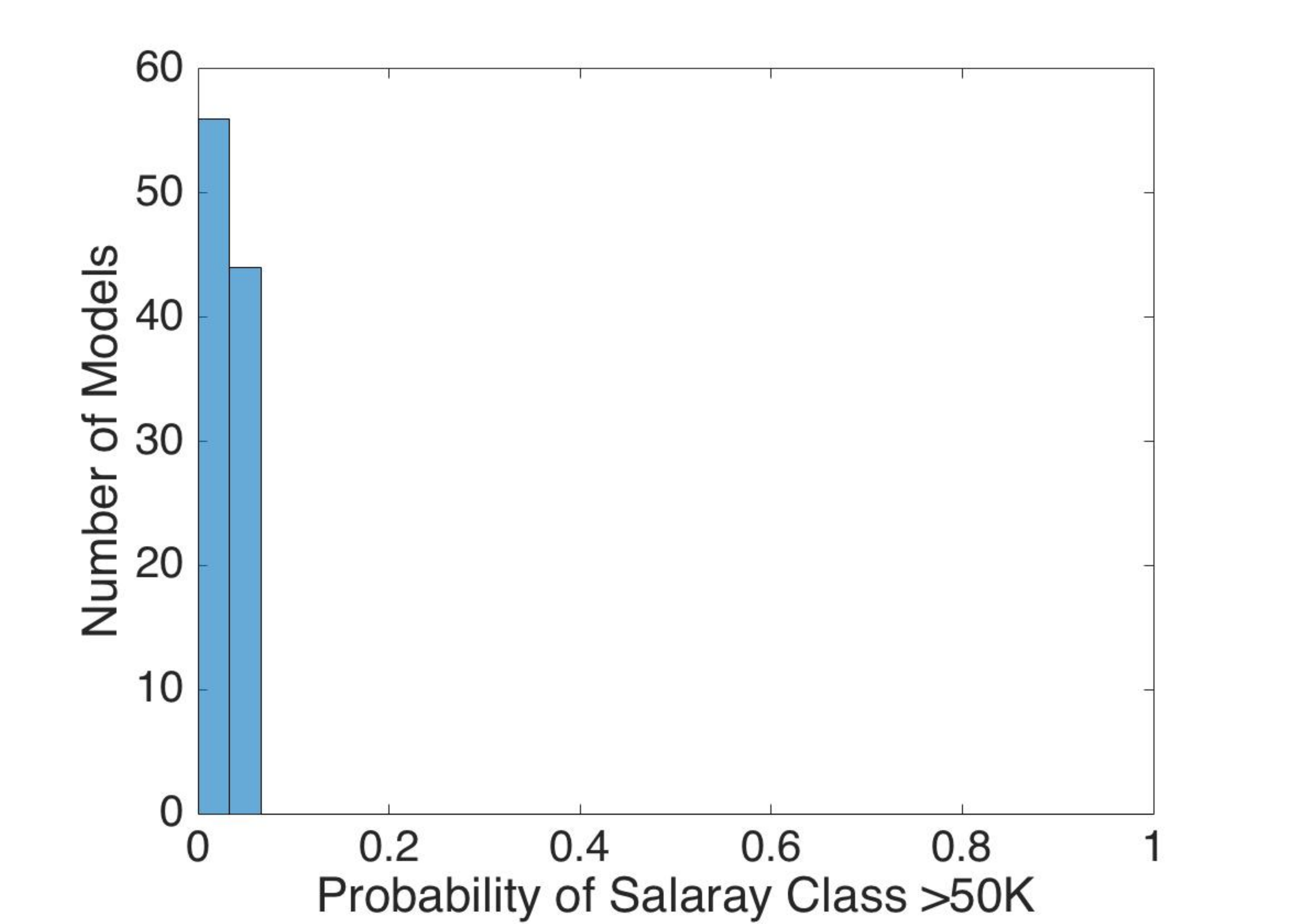}
        \caption{Prediction on an individual in low salary class}
    \end{subfigure}
    ~ 
    \begin{subfigure}[b]{0.24\textwidth}
        \includegraphics[width=\textwidth]{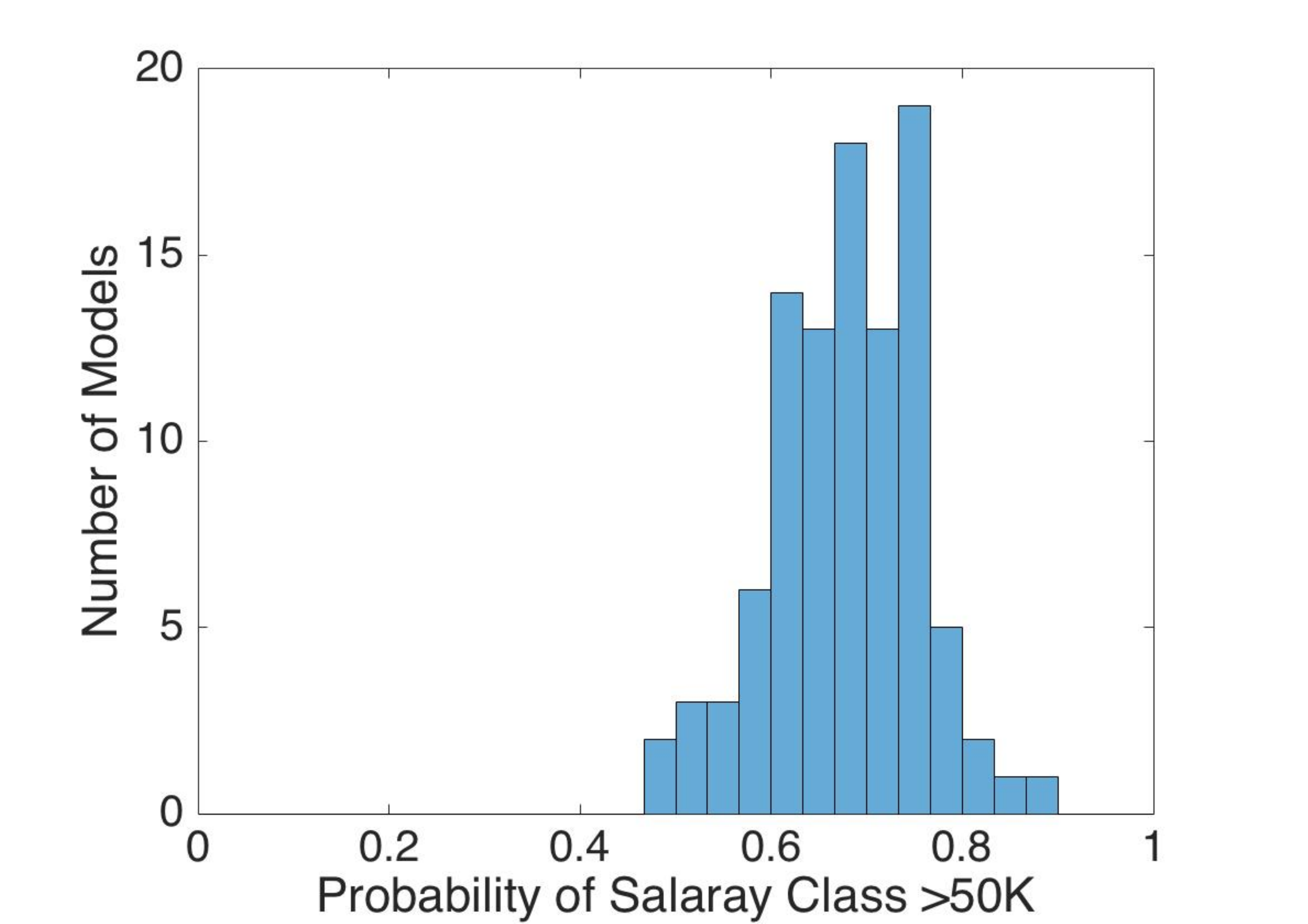}
        \caption{Prediction on an individual in high salary class}
    \end{subfigure}
    ~ 
    \caption{Prediction variation caused by random initialization}\label{fig:init_var}
\end{figure}

Second, besides the target record, the adversary is also uncertain about what other records are included in the training dataset. The occurrence of unexpected training records can introduce small variations in the model's predcitions. If a record's DTP is small enough to be indistinguishable from the variation caused by random sampling, the target has little privacy risk. Figure~\ref{fig:sampling_var} shows the variation in model prediction caused by random sampling. In the experiment, we train $100$ classifiers on $100$ different training datasets uniformly sampled from the same population. We calculate each model's prediction on two individuals and plot the histogram of the predicted probability that the individual has annual salary greater than 50K. 

\begin{figure}
    \centering
    \begin{subfigure}[b]{0.24\textwidth}
        \includegraphics[width=\textwidth]{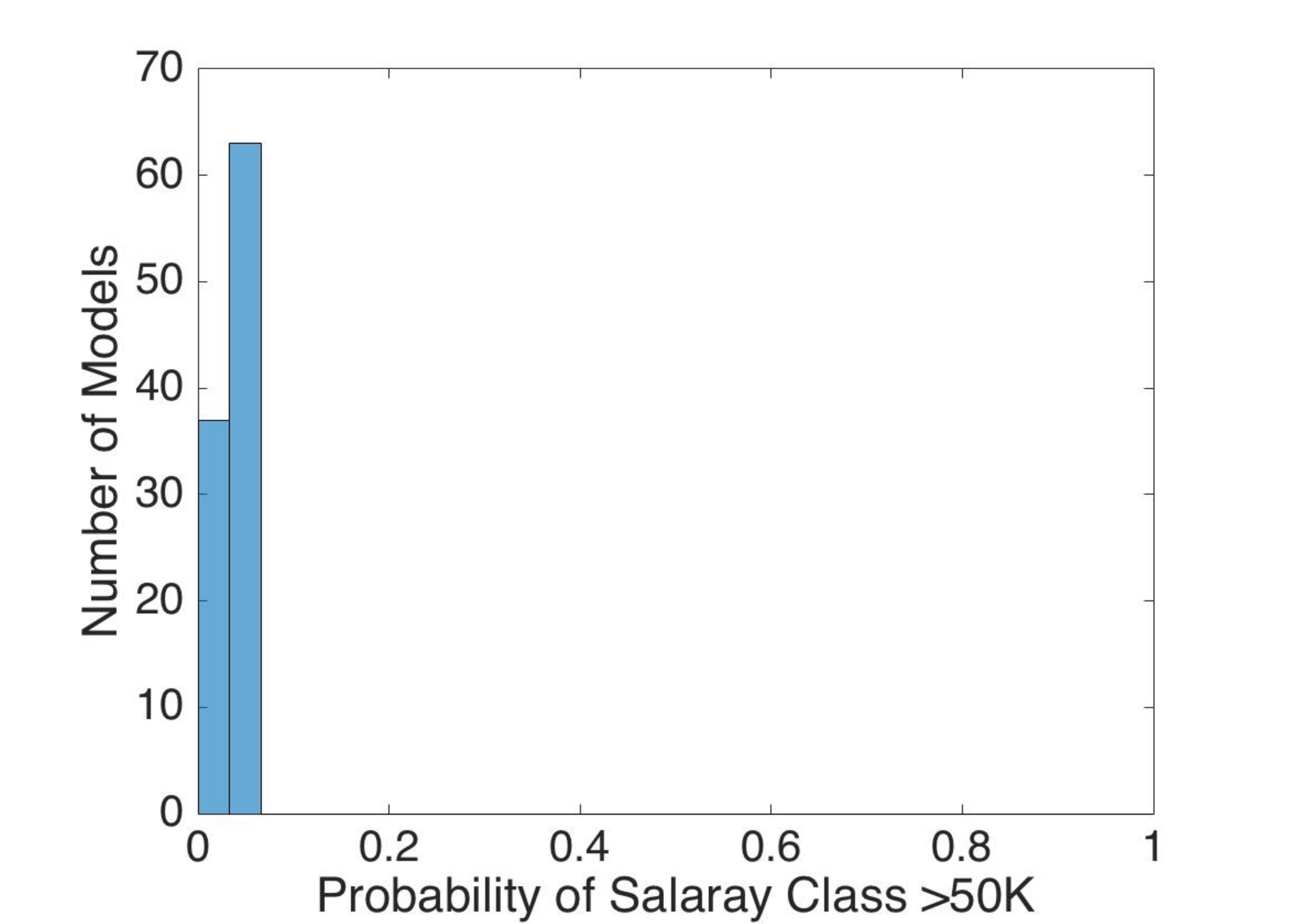}
        \caption{Prediction on an individual in low salary class}
    \end{subfigure}
    ~ 
    \begin{subfigure}[b]{0.24\textwidth}
        \includegraphics[width=\textwidth]{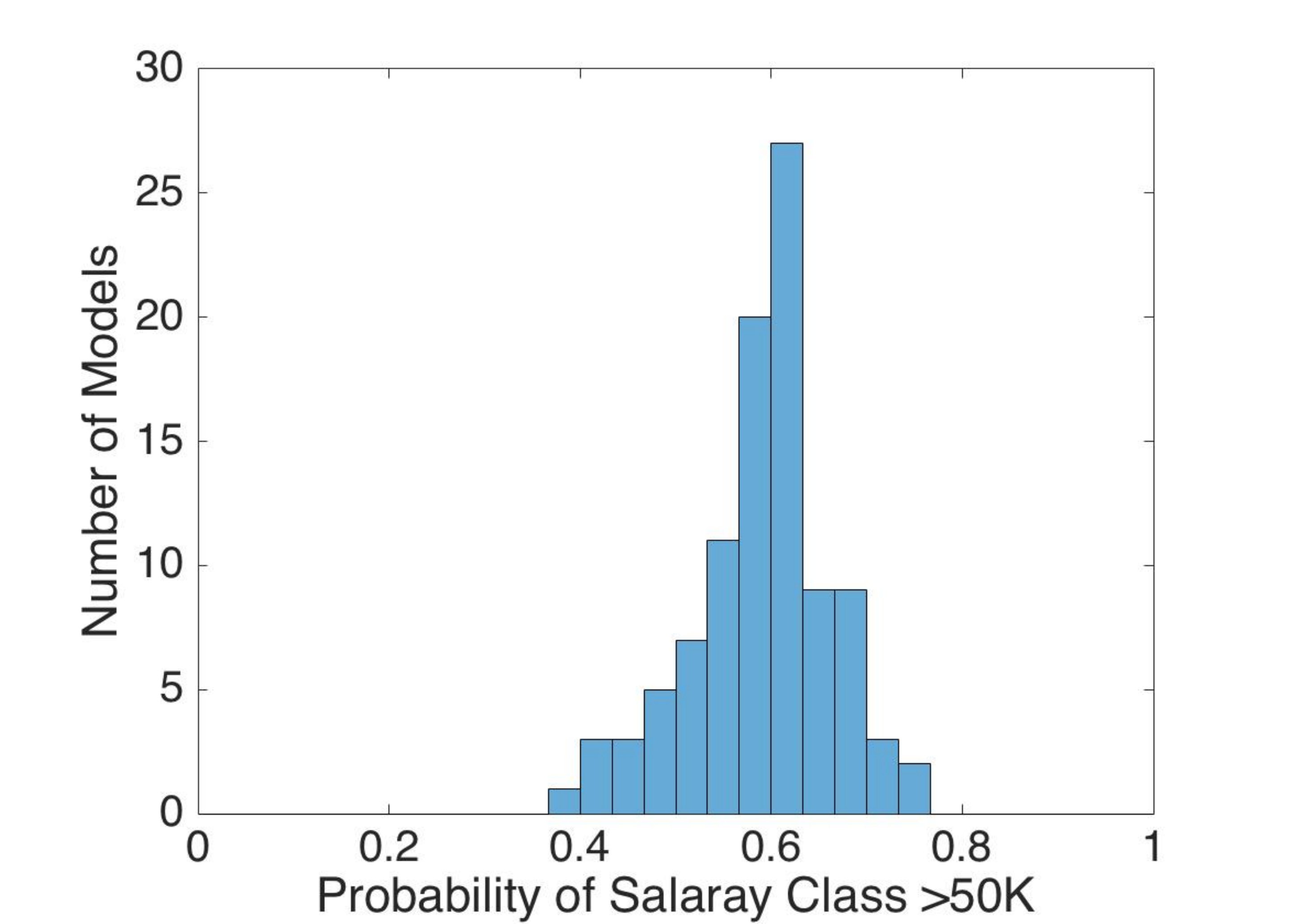}
        \caption{Prediction on an individual in high salary class}
    \end{subfigure}
    ~ 
    \caption{Prediction variation caused by random sampling}\label{fig:sampling_var}
\end{figure}

\section{Open Questions}

Experimental results suggest that DTP is a good predictor of the performance of state-of-art membership inference attacks. However, it remains an open question if records with low DTP are always safe from membership inference attacks. In this section, we discuss two potential privacy risks for low-DTP records.

\paragraphbe{DTP-$1$ Hypothesis.}
In this paper, we use the DTP-$1$ hypothesis as guidance to identify records and classifiers with high privacy risk. By experimenting with state-of-art membership inference attacks on machine learning models, we find that when a training record only has a very small influence on the prediction of a classifier, this small influence is likely to be indistinguishable from the variation in prediction due to random sampling of the training records or random initialization of the weight vectors before training. We use DTP-$1$ hypothesis as a rule-of-thumb for determining whether the influence of a training record is smaller than the influence of other factors unknown to the adversary, such as randomization in the training algorithm and existence of unexpected records in the training set. However, in practice, even when DTP is smaller than $1$, the influence of these uncertain factors can be smaller than the influence of the training record. Therefore, satisfying the DTP-1 hypothesis cannot guarantee that records with DTP smaller than 1 have no privacy risks. With a better understanding on the adversary's background knowledge and the influence of randomness in machine learning algorithms, it may be possible to determine a finer threshold for a safe DTP.

\paragraphbe{Risk of Indirect Attacks and Multiple Queries.}
PDTP measures the privacy risk of directly querying the target record. Based on experimental validations, we find that training records with low PDTP are less likely to be vulnerable to direct attacks. However, models that are not training stable have a potential of leaking the record's membership information through other queries, and an adversary may use this information to perform an indirect attack. Although we do not know of any practical indirect attacks, it remains an open question to analyze the training stability of some machine learning models and to design indirect attacks for models that are not training stable.

Another challenge is to analyze the risk of allowing an adversary to get predictions of multiple queries from the same machine leanrning model. DTP measures the privacy risk for a single query. However, if an adversary is allowed to query the target model multiple times, he may accumulate more information about the target record $t$. We leave for future work the study of how this accumulation of information can be used to design stronger membership inference attacks. Specifically, there are two open questions: (1) \emph{How do we select multiple queries whose results indicate the membership of a target record?} (2) \emph{How do we estimate an upper bound on the accuracy of membership inference when an adversary can submit unlimited number of queries to the model?}

\section{Conclusions}

In this work, we propose differential training privacy (DTP) as an empirical metric to estimate the privacy risk of publishing a classifier. DTP estimates the privacy risk of a training record by measuring its influence on the predictions of machine learning models. A large DTP indicates that the record's influence is strong enough to indicate its presence in the training dataset. We measure DTP of popular machine learning models including neural networks, Naive Bayes, and logistic regressions. We compare these measurements with the accuracy of different types of membership inference attacks, including the most effective one in prior works. Experimental results demontrate that DTP is both efficient and effective in estimating privacy risks. Specifically, our attacks have at most $66.5\%$ accuracy (baseline: $50\%$) on classifiers with DTP-values under $0.5$ and almost always over $90\%$ accuracy on classifiers with DTP larger than $4$. Based on these results, we propose DTP-$1$ hypothesis as a rule-of-thumb criterion for publishing a classifier: \emph{if a classifier has a DTP value above $1$, it should not be published. }

Although DTP has a high correlation with the accuracy of a membership attack, it provides no guarantee about a record's privacy protection. Specifically, we propose two potential privacy leakages for records with low DTP. First, a low-DTP record is vulnerable to membership inferences when the model's predictions on it are unlikely to be influenced by other records or random initializations. Second, the membership of a low-DTP record might be leaked by indirect queries or the combination of multiple queries. This observation can serve as a new direction for designing stronger membership inference attacks and defenses. 

\bibliographystyle{acm}
\bibliography{reference}

\appendix
\pagebreak

\section{Proofs}

\label{appendix:proof}

\begin{proof}[Proof of Theorem~\ref{theorem:stability}]
Since $(\A,T)$ is $\epsilon$-PDTP, for all $t \in T$, we have
\begin{displaymath}
	p_{\A(T)}(y^{(t)} \mid \mathbf{x}^{(t)}) \le e^{\epsilon} p_{\A(T \setminus \{t\})}(y^{(t)} \mid \mathbf{x}^{(t)})
\end{displaymath}
and
\begin{displaymath}
	p_{\A(T)}(y^{(t)} \mid \mathbf{x}^{(t)}) \ge e^{-\epsilon} p_{\A(T \setminus \{t\})}(y^{(t)} \mid \mathbf{x}^{(t)}).
\end{displaymath}
Since $\A$ is $\delta$-training stable on $T$, for $\gamma = \max(\delta, e^{\epsilon})$, we have
\begin{displaymath}
p_{\A(T)}(y \mid \mathbf{x}) \le \gamma p_{\A(T \setminus \{t\})}(y \mid \mathbf{x})
\end{displaymath}
and
\begin{displaymath}
p_{\A(T)}(y \mid \mathbf{x}) \ge \gamma^{-1} p_{\A(T \setminus \{t\})}(y \mid \mathbf{x}).
\end{displaymath}
Let $\epsilon ' = \ln(\gamma) = \max(\ln(\delta), \epsilon)$, then
\begin{displaymath}
p_{\A(T)}(y \mid \mathbf{x}) \le e^{\epsilon'} p_{\A(T \setminus \{t\})}(y \mid \mathbf{x})
\end{displaymath}
and
\begin{displaymath}
p_{\A(T)}(y \mid \mathbf{x}) \ge e^{-\epsilon'} p_{\A(T \setminus \{t\})}(y \mid \mathbf{x}).
\end{displaymath}
Therefore, $(\A,T)$ is $\epsilon '$-DTP.
\end{proof}

\begin{proof}[Proof of Proposition~\ref{prop:bi}]
Let $T' = T \setminus \{t\}$.  Let $\A(T)$  and $\A(T')$ be two Naive Bayes classification models separately learned on dataset $T$ and $T'$, and $N$ be the number of records in $T$. We use $n_{\mathbf{x}, y}$ to represent the number of records with the feature vector $\mathbf{x}$ and the class label $y$, and use $n_\mathbf{x}$ to represent the number of records in $T$ with feature vector $\mathbf{x}$.
For all $\mathbf{x} \in X^{m}$, for all $y \in Y$, so that $n_{\mathbf{x},y} > 1$ and $n_{\mathbf{x}} > 1$, we have
\begin{displaymath}
p_{\A(T)}(y \mid \mathbf{x}) = \frac{n_{\mathbf{x},y}}{n_{\mathbf{x}}}.
\end{displaymath}
When $\mathbf{x} \neq \mathbf{x}^{(t)}$,
\begin{equation}
\label{eq:x_neq}
p_{\A(T \setminus \{t\})}(y \mid \mathbf{x}) = p_{\A(T)}(y \mid \mathbf{x}).
\end{equation}
When $\mathbf{x} = \mathbf{x}^{(t)}$ and $y \neq y^{(t)}$,
\begin{displaymath}
p_{\A(T \setminus \{t\})}(y \mid \mathbf{x}^{(t)}) = \frac{n_{\mathbf{x}^{(t)},y}}{n_{\mathbf{x}^{(t)}} - 1}.
\end{displaymath}
Therefore,
\begin{equation}
\label{eq:y_neq}
\frac{p_{\A(T)}(y \mid \mathbf{x}^{(t)})}{p_{\A(T \setminus \{t\})}(y \mid \mathbf{x}^{(t)})} = \frac{n_{\mathbf{x}^{(t)}} - 1}{n_{\mathbf{x}^{(t)}}} < 1.
\end{equation}
When $\mathbf{x} = \mathbf{x}^{(t)}$ and $y = y^{(t)}$,
\begin{displaymath}
p_{\A(T \setminus \{t\})}(y^{(t)} \mid \mathbf{x}^{(t)}) = \frac{n_{\mathbf{x}^{(t)},y^{(t)}} - 1}{n_{\mathbf{x}^{(t)}} - 1}.
\end{displaymath}
Therefore,
\begin{displaymath}
\begin{aligned}
\frac{p_{\A(T)}(y^{(t)} \mid \mathbf{x}^{(t)})}{p_{\A(T \setminus \{t\})}(y^{(t)} \mid \mathbf{x}^{(t)})} 
&= \frac{n_{\mathbf{x}^{(t)},y^{(t)}}}{n_{\mathbf{x}^{(t)}}} \left(\frac{n_{\mathbf{x}^{(t)},y^{(t)}}-1}{n_{\mathbf{x}^{(t)}}-1}\right)^{-1} \\
&= \frac{n_{\mathbf{x}^{(t)},y^{(t)}}n_{\mathbf{x}^{(t)}} - n_{\mathbf{x}^{(t)},y^{(t)}}}
{n_{\mathbf{x}^{(t)},y^{(t)}}n_{\mathbf{x}^{(t)}} - n_{\mathbf{x}^{(t)}}}.
\end{aligned}
\end{displaymath}
Since
\begin{displaymath}
n_{\mathbf{x}^{(t)},y^{(t)}} \leq n_{\mathbf{x}^{(t)}},
\end{displaymath}
we have
\begin{equation}
\label{eq:xy_eq}
\frac{p_{\A(T)}(y^{(t)} \mid \mathbf{x}^{(t)})}{p_{\A(T \setminus \{t\})}(y^{(t)} \mid \mathbf{x}^{(t)})} \geq 1.
\end{equation}

From equations~\ref{eq:x_neq}, \ref{eq:y_neq}, and \ref{eq:xy_eq}, we have for all $\mathbf{x} \in X^{m}$, for all $y \in Y$,
\begin{displaymath}
\frac{p_{\A(T)}(y \mid \mathbf{x})}{p_{\A(T \setminus \{t\})}(y \mid \mathbf{x})} \le
\frac{p_{\A(T)}(y^{(t)} \mid \mathbf{x}^{(t)})}{p_{\A(T \setminus \{t\})}(y^{(t)} \mid \mathbf{x}^{(t)})}. \end{displaymath}
Therefore, it suffices to prove
\begin{displaymath}
\frac{p_{\A(T)}(y \mid \mathbf{x})}{p_{\A(T \setminus \{t\})}(y \mid \mathbf{x})} \ge
\left(\max( \frac{4}{3}, \frac{p_{\A(T)}(y^{(t)} \mid \mathbf{x}^{(t)})}{p_{\A(T \setminus \{t\})}(y^{(t)} \mid \mathbf{x}^{(t)})})\right)^{-1},
\end{displaymath}
which is equivalent to
\begin{displaymath}
\frac{n_{\mathbf{x}^{(t)}}}{n_{\mathbf{x}^{(t)}} -1 } \le \max(\frac{4}{3}, \frac{n_{\mathbf{x}^{(t)},y^{(t)}}}{n_{\mathbf{x}^{(t)},y^{(t)}} - 1}\frac{n_{\mathbf{x}^{(t)}} - 1}{n_{\mathbf{x}^{(t)}}}).
\end{displaymath}
Let $a = n_{\mathbf{x}^{(t)},y^{(t)}}$ and $b = n_{\mathbf{x}^{(t)}}$. Then $a < b$.
By solving
\begin{displaymath}
\frac{b}{b-1} > \frac{b-1}{b}\frac{a}{a-1},
\end{displaymath}
we get
\begin{displaymath}
b > a + \frac{\sqrt{4a(a-1)}}{2}.
\end{displaymath}
Let $b^* = \lceil a + \frac{\sqrt{4a(a-1)}}{2} \rceil$, then $b^*_{\min} = 4$. Since $\frac{b}{b-1}$ is a decreasing function of $b$, when $\frac{b}{b-1} > \frac{b-1}{b}\frac{a}{a-1}$, we have
\begin{displaymath}
\frac{b}{b-1} \le \frac{b^*_{\min}}{b^*_{\min}-1} = \frac{4}{3}.
\end{displaymath}
Therefore,
\begin{displaymath}
\frac{n_{\mathbf{x}^{(t)}}}{n_{\mathbf{x}^{(t)}} -1 } \le \max(\frac{4}{3}, \frac{n_{\mathbf{x}^{(t)},y^{(t)}}}{n_{\mathbf{x}^{(t)},y^{(t)}} - 1}\frac{n_{\mathbf{x}^{(t)}} - 1}{n_{\mathbf{x}^{(t)}}}).
\end{displaymath}
Bayes inference algorithm is $\frac{4}{3}$-training stable.
\end{proof}

\begin{proof}[Proof of Proposition~\ref{prop:nb}]
Let $T' = T \setminus \{t\}$. Let $\A(T)$  and $\A(T')$ be two Bayesian classification models separately learned on dataset $T$ and $T'$, and $N$ be the number of records in $T$.  For simplification, we use $\p(x_i \mid y)$ and $\p'(x_i \mid y)$ to represent the conditional probability that the $i$-th feature equals to $x_i$ given the class label equals $y$ in $T$ and $T'$, and use $\p(y)$ and $\p'(y)$ to represent probability that the class label equals $y$ in $T$ and $T'$. We use $n_{x_i, y}$ to represent the number of records with the $i$-th feature $x_i$ and the class label $y$, and use $n_y$ to represent the number of records in $T$ with label $y$.

According to the conditional independence assumption of Naive Bayes, we have
\begin{equation}
\label{eq:nb}
\frac{\p(y \mid \mathbf{x})}{\p'(y \mid \mathbf{x})} = \frac{\prod_{i=1}^{m} \p(x_i \mid y) \p(y)}{\prod_{i=1}^{m} \p'(x_i \mid y) \p'(y)} = \prod_{i=1}^{m} \frac{\p(x_i \mid y)}{\p'(x_i \mid y)} \frac{\p(y)}{\p'(y)}.
\end{equation}

First, we prove that
\begin{equation*}
\frac{\p(y^{(t)} \mid \mathbf{x}^{(t)})}{\p'(y^{(t)} \mid \mathbf{x}^{(t)})} \geq 1.
\end{equation*}
Since $\p'(y^{(t)} \mid \mathbf{x}^{(t)}) > 0$, based on Equation~\ref{eq:nb}, we have $\p'(x^{(t)}_i \mid y^{(t)}) > 0$, and $\p'(y) > 0$.
\begin{equation*}
\begin{aligned}
\frac{\p(x^{(t)}_i \mid y^{(t)})}{\p'(x^{(t)}_i \mid y^{(t)})}
= & \frac{n_{x^{(t)}_i, y^{(t)}}\left(n_{y^{(t)}}-1\right)}{n_{y^{(t)}}\left(n_{x^{(t)}_i, y^{(t)}}-1\right)} \\
= & \frac{n_{x^{(t)}_i, y^{(t)}}n_{y^{(t)}} - n_{x^{(t)}_i, y^{(t)}}}{n_{x^{(t)}_i, y^{(t)}}n_{y^{(t)}} - n_{y^{(t)}}}.
\end{aligned}
\end{equation*}
Since
\begin{equation*}
n_{x^{(t)}_i, y^{(t)}} \le n_{y^{(t)}},
\end{equation*}
we have
\begin{equation}
\label{eq:pxy_mono}
\frac{\p(x_i^{(t)} \mid y^{(t)})}{\p'(x_i^{(t)} \mid y^{(t)})} \geq 1.
\end{equation}
Similarly
\begin{equation*}
\begin{aligned}
\frac{\p(y)}{\p'(y)} = &\frac{\p(y^{(t)})}{\p'(y^{(t)})}
= & \frac{n_{y^{(t)}}N - n_{y^{(t)}}}{n_{y^{(t)}}N - N} \geq 1.
\end{aligned}
\end{equation*}
Therefore,
\begin{equation}
\label{eq:ytxt}
\frac{\p(y^{(t)} \mid \mathbf{x}^{(t)})}{\p'(y^{(t)} \mid \mathbf{x}^{(t)})}
= \prod_{i=1}^{m}\frac{\p(x_i^{(t)} \mid y )}{\p'(x_i^{(t)} \mid y)} \frac{\p(y^{(t)})}{\p'(y^{(t)})} \geq 1.
\end{equation}

Next, we prove that Naive Bayes classification algorithm is $\delta$-training stable on $T$, for
\begin{equation*}
\delta = \left(\frac{n_{y_{\min}}}{n_{y_{\min}} - 1}\right)^{m-1} \frac{n}{n - 1}.
\end{equation*}
We start with the case when $y \neq y^{(t)}$.
Since, $n_{x_i, y}$ and $n_y$ does not change after the removal of $d$, we have
\begin{equation*}
\frac{\p(x_i \mid y)}{\p'(x_i \mid y)} = 1.
\end{equation*}
And,
\begin{equation*}
\begin{aligned}
&\frac{\p(y)}{\p'(y)}
= & \frac{n_y}{n} \left(\frac{n_y}{n-1}\right)^{-1}
= \frac{n-1}{n}.
\end{aligned}
\end{equation*}
Therefore, when  $y \neq y^{(t)}$
\begin{equation*}
\begin{aligned}
&\frac{\p(y \mid \mathbf{x})}{\p'(y\mid \mathbf{x})}
= &\prod_{i=1}^{m}\frac{\p(x_i\mid y)}{\p'(x_i \mid y)} \frac{\p(y\mid \mathbf{x})}{\p'(y\mid\mathbf{x})}
= &\frac{n-1}{n}< 1.
\end{aligned}
\end{equation*}
According to equation~\ref{eq:ytxt}, we have
\begin{equation*}
\begin{aligned}
\frac{\p(y \mid \mathbf{x})}{\p'(y\mid \mathbf{x})}
< \frac{\p(y^{(t)} \mid \mathbf{x}^{(t)})}{\p'(y^{(t)} \mid \mathbf{x}^{(t)})}.
\end{aligned}
\end{equation*}
Since
\begin{equation*}
\frac{n-1}{n} > \left(\frac{n_{y_{\min}} - 1}{n_{y_{\min}}}\right)^{m-1} \frac{n-1}{n} \geq \delta^{-1},
\end{equation*}
we have
\begin{equation*}
\delta^{-1} < \frac{\p(y \mid \mathbf{x})}{\p'(y\mid \mathbf{x})} < \frac{\p(y^{(t)} \mid \mathbf{x}^{(t)})}{\p'(y^{(t)} \mid \mathbf{x}^{(t)})}.
\end{equation*}
We then consider the case when $y = y^{(t)}$.\\
When $x_i = x_i^{(t)}$,
\begin{equation}
\label{eq:pxy_1}
 \frac{\p(x_i \mid y^{(t)})}{\p'(x_i \mid y^{(t)})} =  \frac{\p(x_i^{(t)} \mid y^{(t)})}{\p'(x_i^{(t)} \mid y^{(t)})} \geq 1.
\end{equation}
When $x_i \neq x_i^{(t)}$,
\begin{equation}
\label{eq:pxy_2}
\begin{aligned}
\frac{\p(x_i \mid y^{(t)})}{\p'(x_i \mid y^{(t)})} = \frac{n_{y^{(t)}}-1}{n_{y^{(t)}}} < 1.
\end{aligned}
\end{equation}
According to Equation~\ref{eq:pxy_mono}, we have
\begin{equation*}
\frac{\p(x_i \mid y^{(t)})}{\p'(x_i \mid y^{(t)})}  = \frac{n_{y^{(t)}}-1}{n_{y^{(t)}}} < \frac{\p(x_i^{(t)} \mid y^{(t)})}{\p'(x_i^{(t)} \mid y^{(t)})}.
\end{equation*}
Therefore,
\begin{equation*}
\begin{aligned}
\frac{\p(y^{(t)} \mid \mathbf{x})}{\p'(y^{(t)}\mid \mathbf{x})}
&= \prod_{i=1}^{m} \frac{\p(x_i \mid y^{(t)})}{\p'(x_i \mid y^{(t)})} \frac{\p(y^{(t)})}{\p'(y^{(t)})} \\
&\leq \prod_{i=1}^{m} \frac{\p(x_i^{(t)} \mid y^{(t)})}{\p'(x_i^{(t)}  \mid y^{(t)})} \frac{\p(y^{(t)})}{\p'(y^{(t)})} \\
&=  \frac{\p(y^{(t)} \mid \mathbf{x}^{(t)})}{\p'(y^{(t)} \mid \mathbf{x}^{(t)})}.
\end{aligned}
\end{equation*}
From Equation~\ref{eq:pxy_1} and \ref{eq:pxy_2}, we have
\begin{equation*}
\prod_{i=1}^{m} \frac{\p(x_i \mid y^{(t)})}{\p'(x_i \mid y^{(t)})} \geq \left(\frac{n_{y^{(t)}}-1}{n_{y^{(t)}}}\right)^{m}
\end{equation*}
Since
\begin{equation*}
\frac{\p(y^{(t)})}{\p'(y^{(t)})} = \frac{n_{y^{(t)}} (n-1)}{n(n_{y^{(t)}} - 1)} = \frac{n_{y^{(t)}}}{n_{y^{(t)}} - 1} \frac{n-1}{n},
\end{equation*}
we have
\begin{equation*}
\begin{aligned}
\frac{\p(y^{(t)} \mid \mathbf{x})}{\p'(y^{(t)}\mid \mathbf{x})}
&= \prod_{i=1}^{m} \frac{\p(x_i \mid y)}{\p'(x_i \mid y)} \frac{\p(y^{(t)})}{\p'(y^{(t)})} \\
&\geq \left(\frac{n_{y^{(t)}}-1)}{n_{y^{(t)}}}\right)^{m-1} \frac{n-1}{n} \\
&\geq \delta^{-1}.
\end{aligned}
\end{equation*}
Hence,
\begin{equation*}
\delta^{-1} \leq \frac{\p(y^{(t)} \mid \mathbf{x})}{\p'(y^{(t)}\mid \mathbf{x})} \leq \frac{\p(y^{(t)} \mid \mathbf{x}^{(t)})}{\p'(y^{(t)} \mid \mathbf{x}^{(t)})}.
\end{equation*}
Let $\gamma = \max\left(\delta,\frac{\p(y^{(t)} \mid \mathbf{x}^{(t)})}{\p'(y^{(t)} \mid \mathbf{x}^{(t)})}\right)$, then
\begin{equation*}
\gamma^{-1} \leq \frac{\p(y \mid \mathbf{x})}{\p'(y\mid \mathbf{x})} \leq \gamma.
\end{equation*}
\end{proof}

\begin{proof}[Proof of Proposition~\ref{prop:dt}]
Let $\p'_1(y \mid \mathbf{x}), \p'_2(y \mid \mathbf{x}), \dots, \p'_k(y \mid \mathbf{x})$ be the predictions given by each random decision tree learned on $T \setminus \{t\}$. Then for $1 \le k \le K$, we have
\begin{displaymath}
\frac{\p_k(y \mid \mathbf{x})}{\p'_k(y \mid \mathbf{x})} \le \max\left(\frac{4}{3},\frac{\p_k(y^{(t)} \mid \mathbf{x}^{(t)})}{\p'_K(y^{(t)} \mid \mathbf{x}^{(t)})}\right)
\end{displaymath}
Therefore,
\begin{displaymath}
\begin{aligned}
\frac{\p_{\A(T)}(y \mid \mathbf{x})}{\p_{\A(T \setminus \{t\})}(y \mid \mathbf{x})}
= &\left(\prod_{k = 1}^{K}\frac{\p_k(y \mid \mathbf{x})}{\p'_k(y \mid \mathbf{x})} \right)^{\frac{1}{K}} \\
\le & \max \left(\frac{4}{3}, \frac{\p_{\A(T)}(y^{(t)} \mid \mathbf{x}^{(t)})}{\p_{\A(T \setminus \{t\})}(y^{(t)} \mid \mathbf{x}^{(t)})} \right)
\end{aligned}
\end{displaymath}
\end{proof}

\begin{proof}[Proof of Theorem~\ref{theorem:lipschitz}]
According to Lipschitz condition, for all $\mathbf{x} \in X^{m}$, for all $t \in T$, we have
\begin{displaymath}
\left| C_{u_T}(\mathbf{x}) - C_{u_T^{-t}}(\mathbf{x}) \right|_{\infty} \leq L \left| u_T - u_T^{-t} \right|_{\infty} \leq \epsilon L.
\end{displaymath}
That is, for all $y \in Y$,
\begin{displaymath}
\left| \log p_{\A(T)}(y \mid \mathbf{x}) - \log p_{\A(T \setminus \{t\})}(y \mid \mathbf{x}) \right| \leq \epsilon L. 
\end{displaymath}
Therefore, for all $\mathbf{x} \in X^{m}$, for all $y \in Y$, 
\begin{displaymath}
e^{-\epsilon L} \leq \frac{p_{\A(T)}(y \mid \mathbf{x}) }{p_{\A(T \setminus \{t\})}(y \mid \mathbf{x})} \leq e^{\epsilon L} .
\end{displaymath}
That is, for all $t \in T$,
\begin{displaymath}
DTP_{\A,T}(t) \leq \epsilon \cdot L . 
\end{displaymath}
\end{proof}

\end{document}